\documentclass[11pt]{article}

\usepackage[a4paper,margin=1in]{geometry}

\usepackage{amsmath,amssymb,amsthm}
\usepackage{mathtools}
\mathtoolsset{showonlyrefs} 

\usepackage{algorithm}
\usepackage{algpseudocode}

\usepackage{graphicx}
\usepackage{subcaption}

\usepackage{enumitem}

\usepackage[colorlinks=true,
            linkcolor=blue,
            citecolor=blue,
            urlcolor=blue]{hyperref}

\usepackage{bm}

\renewcommand{\d}{\mathop{}\!\mathrm{d}}
\newtheorem{lemma}{Lemma}

\newtheorem{corollary}{Corollary}
\newtheorem{theorem}{Theorem}
\newtheorem{remark}{Remark}

\numberwithin{equation}{section}

\theoremstyle{plain}
\newcounter{appthm}

\theoremstyle{remark}

\usepackage{xcolor}

\title{An Analytic COS Method for Compound Option Valuation}
\author{
Zhipeng Huang 
\thanks{Corresponding author. Mathematical Institute, Utrecht University, Utrecht, The Netherlands (\texttt{z.huang1@uu.nl}).}
\and
Cornelis W.~Oosterlee
\thanks{Mathematical Institute, Utrecht University, Utrecht, The Netherlands (\texttt{c.w.oosterlee@uu.nl}).}}

\date{}

\begin{document}

\maketitle

\begin{abstract}
We develop an analytic Fourier cosine (COS) method for the valuation of compound options. By deriving closed-form expressions for the cosine coefficients at all compound stages, the proposed method eliminates the need for numerical quadrature in intermediate exercise stages while retaining the convergence properties of the underlying COS approximation. The formulation extends to multi-stage compound structures and a broader class of payoffs, and remains applicable to a wide class of stochastic models characterized by known characteristic functions, including jump-diffusion dynamics. Numerical experiments demonstrate improved computational efficiency compared with quadrature-based implementations while maintaining high accuracy. Applications to staged real-option problems further illustrate the flexibility of the method in handling nested decision structures under different uncertainty dynamics.
\end{abstract}

\noindent\textbf{AMS subject classifications.}
91G20, 91G60, 65T40.

\medskip

\noindent\textbf{Keywords.}
Compound options; COS method; analytic coefficients; multi-stage real options; Le\'vy processes; recursive valuation.

\tableofcontents 

\section{Introduction}

Compound options are derivatives whose underlying asset is itself an option. They were introduced by Geske \cite{Geske1979}, who derived a closed-form valuation formula under the Black--Scholes model \cite{BlackScholes1973,Merton1973}.

Beyond financial derivatives, compound-option structures arise in multi-stage investment problems such as real options, staged R\&D investments, and sequential project expansion \cite{DixitPindyck1994,Trigeorgis1996}. In such settings, each decision stage grants the right, but not the obligation, to proceed to the next phase by paying an investment cost. The resulting valuation problem involves nested continuation values across successive decision dates.

Mathematically, these problems lead to nested conditional expectations of future project values. While this structure provides a natural framework for capturing managerial flexibility under uncertainty, it also introduces computational challenges due to the recursive nature of the continuation values. These challenges become pronounced when realistic stochastic dynamics beyond the Black--Scholes model are considered, since each stage introduces an additional conditional expectation layer.

To illustrate the structure, consider a two-stage setting with decision times $0<T_1<T$. Let $S(t)$ denote the project value at time $t$, $K_1$ the investment cost in the first stage, and $K$ the terminal exercise threshold. At time $T_1$, the value of continuing the project is given by
\begin{equation}\label{1}
\left( V_{\mathrm{inner}}(T_1, S(T_1); T, K) - K_1 \right)^+,
\end{equation}
where
\begin{equation}\label{2}
V_{\mathrm{inner}}(T_1, S(T_1); T, K)
=
e^{-\rho(T-T_1)}
\mathbb{E}  \left[(S(T)-K)^+ \mid S(T_1)\right].
\end{equation}
Here, the inner option represents the value at time $T_1$ of the remaining investment opportunity with terminal payoff $(S(T)-K)^{+}$ at $T$, and $\rho$ denotes the adopted discount rate.

The time-0 value of the compound investment opportunity is therefore
\begin{equation}\label{3}
V(0,S(0))
=
e^{-\rho T_1}
\mathbb{E} 
\left[
\left( V_{\mathrm{inner}}(T_1, S(T_1); T, K) -K_1 \right)^+
\right].
\end{equation}

Except in the Black--Scholes setting, closed-form solutions are generally not available. In real options applications, geometric Brownian motion (GBM) has traditionally been adopted for tractability \cite{DixitPindyck1994,Trigeorgis1996}. 
However, empirical studies show that jump risk, heavy tails, and stochastic volatility can materially affect investment thresholds and option values \cite{AlvarezRakkolainen2010,Nishihara2016,Chronopoulos2018,LempaSaarinen2020}. 
This is relevant in staged R\&D and infrastructure projects, where uncertainty characteristics may differ between phases. In such environments, valuation accuracy depends not only on modeling flexibility, but also on computational methods capable of handling nested optionality without excessive numerical burden. Consequently, numerical schemes must balance modeling richness with computational tractability across multiple decision layers.

Many economically relevant models, including exponential Lévy and
jump-diffusion models, admit closed-form characteristic functions \cite{ContTankov2004}. This has motivated Fourier-based pricing techniques such as the transform approach of Carr and Madan \cite{CarrMadan1999}. Among these, the Fourier cosine (COS) method of Fang and Oosterlee \cite{FangOosterlee2008,FangOosterlee2009} provides an efficient framework for option valuation under such dynamics. Under suitable regularity conditions, and in particular when the relevant tails decay exponentially, the COS method exhibits exponential convergence; see \cite{Junike2024}.
The COS method is attractive in the compound setting because it avoids state-space discretization and operates directly in transform space. This makes it a natural candidate for handling nested expectations arising in compound option structures.
While the COS method has been applied to compound options, see, for example~\cite{Schyns2025}, existing implementations typically rely on numerical quadrature for the outer cosine coefficients, leading to additional computational cost.  More importantly, this quadrature introduces an additional approximation layer that can obscure spectral convergence and limit scalability in multi-stage settings. To the best of our knowledge, a fully analytic treatment of the outer COS coefficients for compound options has not yet been developed in the literature.

The main contribution of this paper is to derive a fully analytic compound COS formulation in which the outer cosine coefficients are evaluated in closed form by exploiting the trigonometric structure of the inner COS expansion. The resulting framework maintains the spectral accuracy of the COS method while improving computational efficiency at comparable accuracy. In contrast to quadrature-based implementations, the proposed method eliminates the numerical quadrature layer and therefore introduces no additional quadrature error into the compound recursion. Beyond its computational advantages, the analytic structure also provides a transparent mathematical formulation of the compound recursion.

An additional advantage of the proposed approach is its flexibility. Different independent-increment dynamics may be employed between successive decision dates without altering the analytic recursion structure. In particular, stage-specific diffusion and jump distributions can be incorporated by changing the increment characteristic function associated with the corresponding time interval. In simulation-based approaches such as Monte Carlo, introducing such stage-dependent dynamics within nested compound structures typically requires regression-based continuation estimation or nested simulation, while the simulation of jump processes adds further complexity. The transform-based COS recursion avoids these difficulties and allows heterogeneous risk dynamics across stages to be incorporated without modifying the underlying numerical scheme.

The remainder of this paper is organized as follows. Section~\ref{s2} develops the analytic compound COS framework. We formulate the compound pricing problem within the COS method, derive the closed-form evaluation of the outer cosine coefficients, extend the approach to multiple compound decision dates and provide a rigorous error analysis. Section~\ref{s3} demonstrates that the analytic COS structure extends beyond compound options by applying it to other contracts with single-boundary continuation payoffs, including chooser and Bermudan-type options. Section~\ref{s4} presents numerical experiments validating the method under the Black–Scholes model, studying convergence and computational efficiency, and illustrating applications under jump-diffusion dynamics and in real options valuation.

\section{Mathematical framework}\label{s2}

Let us briefly review the COS method introduced by \cite{FangOosterlee2008}, which forms the basis of the numerical method developed in this paper.

Let $ (\Omega, \mathcal{F}, (\mathcal{F}_t)_{t \geq 0}, \mathbb{P} )$ be a filtered probability space satisfying the usual conditions. We consider a positive stochastic process $ S\equiv (S(t))_{t \geq 0}$ representing the value of an underlying project or asset, and define the log-value process by
\begin{equation}
X(t)=\log S(t)
\end{equation}
and throughout the paper, we assume that $X\equiv (X(t))_{t \geq 0}$ is a Markov process.

In real-option applications, the dynamics of $S$ are typically specified under the physical probability measure $\mathbb{P}$, while future cash flows are discounted using a project discount rate $\rho$. The COS methodology requires only that the conditional characteristic function of the transition distribution of $X$ is available in closed form.

More precisely, for $0\le t<T$ and $X(t)=x$, define conditional characteristic function
\begin{equation}
\varphi_X(u;t,T,x)
:=
\mathbb E
\left[
e^{iuX(T)}
\,\middle|\,
X(t)=x
\right],\quad \forall u\in\mathbb R
\end{equation}
By the Markov property, conditioning on the information available at time $t$ can be reduced to conditioning on the current state $X(t)=x$.

Consider a contingent claim with payoff $g(X(T))$. Its value at time $t$ is given by
\begin{equation}\label{6a}
V(t,x)
=
e^{-\rho(T-t)}
\mathbb E
\left[
g(X(T))
\,\middle|\,
X(t)=x
\right].
\end{equation}

Suppose that the transition distribution of $X(T)$ given $X(t)=x$ admits a density, denoted by $f(y;T\mid x,t)$. Then \eqref{6a} can be written as
\begin{equation}\label{6}
V(t,x)
=
e^{-\rho(T-t)}
\int_{\mathbb R}
g(y)
f(y;T\mid x,t)
\,\mathrm dy.
\end{equation} 

Following the COS method, we truncate the integration domain to a finite interval $[a', b']$ containing most of the probability mass of the transition density. On this interval, the density is approximated by the first $N$ terms of its Fourier-cosine expansion,
\begin{equation}\label{7}
f(y;T\mid x,t)
\approx
\sum_{k=0}^{N-1}{}'
F_k(t,x)
\cos   \left( \frac{k\pi(y - a')}{b' - a'} \right),
\qquad
y\in[a', b'],
\end{equation}
where the prime indicates that the first term is multiplied by one-half, and $F_k(t,x)$ are approximations to the Fourier-cosine coefficients of $f$, computed directly from the characteristic function, 
\begin{equation}
F_k(t,x)
=
\frac{2}{b' - a'}
\Re
\left\{
\varphi_X
\left(
\frac{k\pi}{b' - a'};
t,T,x
\right)
e^{-ik\pi a'/(b' - a' )}
\right\} .
\end{equation}

After truncating the integral in \eqref{6} to $[a',b']$ and substituting the approximation \eqref{7}, we obtain the COS valuation formula, 
\begin{equation}
V(t,x)
\approx
e^{-\rho(T-t)}
\sum_{k=0}^{N-1}{}'
\Re
\left\{
\varphi_X
\left(
\frac{k\pi}{b'-a'};
t,T,x
\right)
e^{-ik\pi a'/(b'-a')}
\right\}   H_k
\end{equation}
where $H_k$ are the Fourier-cosine coefficients of the payoff function $g$ and given by
\begin{equation}
H_k
\coloneqq
\frac{2}{b' - a'}
\int_{a'}^{b'}
g(y)
\cos 
\left(
\frac{k\pi(y-a')}{b' - a'}
\right)
\,\mathrm dy.
\end{equation}

\subsection{COS method for compound option pricing}\label{sec:simplecompound}
\noeqref{2}

We now apply the COS methodology to the compound option pricing problem introduced in \eqref{1}--\eqref{3}. The valuation consists of two nested conditional expectations: the inner expectation \eqref{2} determines the value of the underlying option at the intermediate decision time $T_1$, while the outer expectation \eqref{3} prices the compound option at the initial time, with payoff of the form \eqref{1} at $T_1$. The objective is to preserve the transform-based structure of the COS method at both levels, thereby yielding an analytic pricing formulation that is free of quadrature.

Let $X(t) = \log(S(t))$ denote the log-project-value process. Then the inner option value corresponding to \eqref{2} can be written as 
\begin{equation}\label{C_inner}
V_{\mathrm{inner}}(x)
=
e^{-\rho(T-T_1)}
\int_{\mathbb R}
g(y)\,
f(y;T\mid x,T_1)
\,\mathrm dy,
\end{equation}
where $g(y)=(e^y-K)^+$ is the payoff at terminal time $T$.

Applying the COS expansion described previously, the transition density is approximated on a fixed truncated interval $[a',b']$. Define
\begin{equation}
\omega_k=\frac{k\pi}{b'-a'},
\qquad
k=0,\ldots,N_{\mathrm{in}}-1,
\end{equation}
and let
\begin{equation}
V_k
=
\frac{2}{b'-a'}
\int_{a'}^{b'}
(e^y-K)^+
\cos  \left(\omega_k(y-a')\right)
\,\mathrm dy,
\label{13}
\end{equation}
denote the cosine coefficients of the payoff function $g$. Note that the analytical expression for $V_k$ is available for the chosen $g(y)$.

The COS approximation of \eqref{C_inner} is therefore
\begin{equation}\label{C_inner_approx}
V_{\mathrm{inner}}^{\mathrm{COS}}(x)
=
e^{-\rho(T-T_1)}
\sum_{k=0}^{N_{\mathrm{in}}-1}{}'
\Re\!\left\{
\varphi_X(\omega_k;T_1,T,x)
e^{-i\omega_k a'}
\right\}
V_k.
\end{equation}

In the remainder of this paper, we specialize to models for which the log-process $X$ has independent increments. This class includes the Black--Scholes model and exponential Lévy models. Since the increment $X(T)-X\left(T_1\right)$ is independent of $X\left(T_1\right)$, the conditional characteristic function factorizes as
$$
\varphi_X\left(u ; T_1, T, x\right)=e^{i u x} \psi\left(u ; T_1, T\right)
$$
where
$$
\psi(u;T_1,T)
:=
\mathbb E
\left[
e^{iu(X(T)-X(T_1))}
\right]
$$
is the characteristic function of the increment $X(T)-X(T_1)$. Consequently,
\begin{equation}\label{C_inner_cos}
V_{\mathrm{inner}}^{\mathrm{COS}}(x)
= 
e^{-\rho(T-T_1)}
\sum_{k=0}^{N_{\mathrm{in}}-1}{}'
\Re\!\left\{
\psi(\omega_k;T_1,T)
e^{i\omega_k(x-a')}
\right\}
V_k.
\end{equation}

Using Euler's formula and expanding the real part yields the trigonometric representation
\begin{equation}\label{C_inner_cos_tri}
V_{\mathrm{inner}}^{\mathrm{COS}}(x)
=
\sum_{k=0}^{N_{\mathrm{in}}-1}{}'
\left(
A_k
\cos(\omega_k(x-a'))
-
B_k
\sin(\omega_k(x-a'))
\right),
\end{equation}
where 
\begin{equation}\label{C_inner_cos_AB}
A_k
\coloneqq
e^{-\rho(T-T_1)}
\Re  \left\{ \psi(\omega_k;T_1,T) \right\}
V_k,
\qquad
B_k
\coloneqq
e^{-\rho(T-T_1)}
\Im \left\{ \psi(\omega_k;T_1,T) \right\}
V_k.
\end{equation}
The coefficients $A_k$ and $B_k$ are well defined irrespective of whether the payoff coefficients $V_k$ are evaluated analytically or numerically. However, a closed-form expression for $V_k$ is required to obtain a fully quadrature-free implementation.

With the analytical expression for $V_{\mathrm{inner}}^{\mathrm{COS}}(x)$, we can obtain the compound option value, corresponding to equation \eqref{3}, by applying a second COS approximation to the outer option with a payoff
$
\left(V_{\mathrm{inner}}(x)-K_1\right)^{+}
$
at time $T_1$. Let $x_0=\log S(0)$ and define
$$
\nu_n=\frac{n \pi}{b-a}, \quad n=0, \ldots, N_{\mathrm{out}}-1 .
$$

The outer option value is then first approximated by
\begin{equation}\label{bar_C_outer}
\bar{V}_{\mathrm{outer}}^{\mathrm{COS}}(x_0)
= e^{-\rho T_1}   \sum_{n=0}^{N_{\mathrm{out}}-1}{}' 
\Re  \left\{ \psi(\nu_n ; 0, T_1) e^{i \nu_n(x_0-a)} \right\} \Bar{H}_n^{\mathrm{out}} 
\end{equation}
where
\begin{equation}\label{H_out_exact}
\Bar{H}_n^{\mathrm{out}} 
=\frac{2}{b-a}  \int_a^b \left( V_{\mathrm{inner}}(x) - K_1\right)^{+} \cos \left(\nu_n(x-a)\right) \mathrm{d} x . 
\end{equation}

Under the independent-increment assumption and monotonicity of the payoff $g$,  $V_{\mathrm{inner}}(x)$ is non-decreasing in $x$. We assume that $V_{\mathrm{inner}}(x)$ is strictly increasing on the outer truncation interval $[a, b]$ and that $V_{\mathrm{inner}}(a)< K_1 < V_{\mathrm{inner}}(b)$. It then follows that there exists a unique exercise boundary $x^* \in(a, b)$ satisfying
$$
V_{\mathrm{inner}} \left( x^* \right)=K_1 .
$$

In the COS implementation, the exact continuation value $V_{\mathrm{inner}}$ is replaced by its finite approximation $V_{\mathrm{inner}}^{\mathrm{COS}}$. We therefore define the numerical exercise boundary $\widetilde{x}^*$ as a solution of
\begin{equation}\label{cos_boundary}
V_{\mathrm{inner}}^{\mathrm{COS}} (\widetilde x^*) = K_1,
\end{equation} 
The computation of $\widetilde{x}^*$ is discussed in Section~\ref{sec:boundary}, while its existence is established in Section~\ref{sec:error}.
 
Throughout this paper, we assume that, whenever the boundary equation \eqref{cos_boundary} admits a solution $\widetilde{x}^* \in(a,b)$, the solution is unique and $V_{\mathrm{inner}}^{\mathrm{COS}}(x)-K_1$ changes sign only at $\widetilde{x}^*$. Under this single-boundary condition, the payoff at $T_1$ can be written as 
\begin{equation}\label{20}
\left(V_{\mathrm{inner}}^{\mathrm{COS}}(x)-K_1\right)^+
=
\begin{cases}
0, & x \leq \widetilde x^*,\\
V_{\mathrm{inner}}^{\mathrm{COS}}(x)-K_1, & x>\widetilde x^*.
\end{cases}
\end{equation}

Consequently, the outer payoff coefficient $\bar{H}_{n}^{\mathrm{out}}$ can be approximated by
\begin{equation}\label{H_out_reduced}
H_{n}^{\mathrm{out}}
=
\frac{2}{b-a}
\int_{\widetilde x^*}^{b}
\left(
V_{\mathrm{inner}}^{\mathrm{COS}}(x)-K_1
\right)
\cos\!\left(
\frac{n\pi(x-a)}{b-a}
\right)
\,\mathrm dx ,
\end{equation}
and we can replace $\bar{H}_{n}^{\mathrm{out}}$ in \eqref{bar_C_outer} with this $H_{n}^{\mathrm{out}}$ and define the final COS approximation $V_{\mathrm{outer}}^{\mathrm{COS}}(x_0)$,
\begin{equation}\label{C_outer}
V_{\mathrm{outer}}^{\mathrm{COS}}(x_0)
= e^{-\rho T_1}   \sum_{n=0}^{N_{\mathrm{out}}-1}{}' 
\Re  \left\{ \psi(\nu_n ; 0, T_1) e^{i \nu_n(x_0-a)} \right\} H_n^{\mathrm{out}} 
\end{equation}

The difference between $\bar{H}_{n}^{\mathrm{out}}$ and $H_{n}^{\mathrm{out}}$ arises from the COS approximation of the inner continuation value and the resulting displacement of the exercise boundary. These errors are accounted for explicitly in the error analysis in Section \ref{sec:error}.

Now, let us substitute the finite trigonometric representation \eqref{C_inner_cos_tri} into the coefficient \eqref{H_out_reduced} and therefore obtain
\begin{equation}\label{expanded_integral}
\begin{aligned}
H_{n}^{\mathrm{out}}
&=
\frac{2}{b-a}
\sum_{k=0}^{N_{\mathrm{in}}-1}{}'
\Bigg[
A_k
\int_{\widetilde x^*}^{b}
\cos\bigl(\omega_k(x-a')\bigr)
\cos\bigl(\nu_n(x-a)\bigr)
\,\mathrm dx
\\
&\hspace{4.5cm}
-
B_k
\int_{\widetilde x^*}^{b}
\sin\bigl(\omega_k(x-a')\bigr)
\cos\bigl(\nu_n(x-a)\bigr)
\,\mathrm dx
\Bigg]
\\
&\quad
-
\frac{2K_1}{b-a}
\int_{\widetilde x^*}^{b}
\cos\bigl(\nu_n(x-a)\bigr)
\,\mathrm dx
\\
&=
\sum_{k=0}^{N_{\mathrm{in}}-1}{}'
\left(
A_k I_{k,n}^{(c)}
-
B_k I_{k,n}^{(s)}
\right)
-
K_1 I_n^{(o)}   ,
\end{aligned}
\end{equation}
where
\begin{align*}
I_{k,n}^{(c)}
& \coloneqq
\frac{2}{b-a}
\int_{\widetilde x^*}^{b}
\cos (\omega_k(x-a'))
\cos (\nu_n(x-a))
\,\mathrm dx,
\\
I_{k,n}^{(s)}
& \coloneqq
\frac{2}{b-a}
\int_{\widetilde x^*}^{b}
\sin (\omega_k(x-a'))
\cos (\nu_n(x-a))
\,\mathrm dx,
\\
I_n^{(o)}
& \coloneqq
\frac{2}{b-a}
\int_{\widetilde x^*}^{b}
\cos (\nu_n(x-a))
\,\mathrm dx.
\end{align*}
The explicit formulas for these integral terms are derived in Appendix~\ref{app:detail_Hout}.

Indeed, one of the main contributions of this paper is to exploit the trigonometric representation \eqref{C_inner_cos_tri} to derive closed-form expressions for the outer payoff coefficients $H_n^{\mathrm{out}}$. Once the numerical boundary $\widetilde{x}^*$ has been determined, the outer COS coefficients can therefore be evaluated without numerical quadrature, eliminating both the additional quadrature layer and the associated quadrature error.

Our proposed approach requires solving only one scalar nonlinear equation for $\widetilde{x}^*$ and then evaluating closed-form trigonometric expressions. Compared with the quadrature-based evaluation of the outer coefficients, whose computational cost is of order
$
\mathcal{O}\left(n_q\left(N_{\mathrm{in}}+N_{\mathrm{out}}\right)\right) ,
$ 
where $n_{\mathrm{q}}$ denotes the number of quadrature points, the closed-form evaluation has complexity 
$\mathcal{O}\left(N_{\mathrm{out}} N_{\mathrm{in}}\right)$, 
and the additional cost of solving the scalar boundary equation is typically negligible relative to the outer COS summations.

\subsection{Extension to multiple compound decision dates}
\label{sec:multi_compound}

We now extend the analytic COS methodology to the case of multiple compound decision dates. For illustration, we consider multiple compound call options with the same payoff structure as in the simple two-stage case. The methodology, however, remains applicable to a broader class of payoff functions, as discussed in the subsequent subsections.

Let
\begin{equation}
0 \coloneqq T_0 < T_1 < T_2 < \dots < T_m < T_{m+1} \coloneqq T,
\end{equation}
where $T$ denotes the final maturity. At each intermediate decision date $T_i$, $1\leq i\leq m$, the holder has the right, but not the obligation, to acquire the continuation value by paying the strike $K_i$. At the final maturity $T$, the payoff is a European call option with strike $K_{m+1}$.

We use the following convention. For the COS evaluation of the conditional expectation from $T_i$ to $T_{i+1}$, the integration variable is $X_{i+1} \equiv X(T_{i+1})$. Hence, the corresponding truncation interval is denoted by $[a_{i+1},b_{i+1}]$, and the number of cosine terms is denoted by $N_{i+1}$. The quantities $N_{i+1}$ and $[a_{i+1},b_{i+1}]$ may vary with $i$, allowing the spectral resolution and truncation range to adapt to each time step.

For $i=1,\ldots,m+1$, define
\begin{equation}
\omega_k^{(i)}
=
\frac{k\pi}{b_i-a_i},
\qquad
k= 0, 1, \ldots, N_i - 1. 
\end{equation}

Let $V^{(i)}(x_i)$ denote the continuation value evaluated at time $T_i$ when $X(T_i) = x_i$. For $i=m,m-1,\ldots,0$, the COS approximation for $V^{(i)}(x_i)$ takes the form
\begin{equation}\label{multi_COS_recursion}
V^{(i),\mathrm{COS}}(x_i)
=
e^{-\rho(T_{i+1}-T_i)}
\sum_{k=0}^{N_{i+1}-1}{}'
\Re
\left\{
\psi
\left(
\omega_k^{(i+1)};T_i,T_{i+1}
\right)
e^{i\omega_k^{(i+1)}(x_i-a_{i+1})}
\right\}
H_k^{(i+1)}.
\end{equation}

Here, for the case $i=m$, the coefficients $H_k^{(m+1)}$ are the usual cosine coefficients of the terminal call payoff,
\begin{equation}\label{terminal_payoff_coeff}
H_k^{(m+1)}
=
\frac{2}{b_{m+1}-a_{m+1}}
\int_{a_{m+1}}^{b_{m+1}}
\bigl(e^{x_{m+1}}-K_{m+1}\bigr)^+
\cos
\left(
\omega_k^{(m+1)}(x_{m+1}-a_{m+1})
\right)
\,\mathrm d x_{m+1}.
\end{equation}

Then, for $i=m-1,\ldots,0$, since the payoff at time $T_{i+1}$ is of the form $\bigl(V^{(i+1)}(x_{i+1})-K_{i+1}\bigr)^+$, we can substitute $V^{(i+1),\mathrm{COS}}(x_{i+1})$ that we have obtained, into the cosine coefficients $H_k^{(i+1)}$,
\begin{equation}\label{multi_H_coeff}
H_k^{(i+1)}
=
\frac{2}{b_{i+1}-a_{i+1}}
\int_{\widetilde{x}_{i+1}^*}^{b_{i+1}}
\bigl(V^{(i+1),\mathrm{COS}}(x_{i+1})-K_{i+1}\bigr)
\cos\!\left(
\omega_k^{(i+1)}(x_{i+1}-a_{i+1})
\right)
\,\mathrm d x_{i+1}.
\end{equation}
where we have used the numerical boundary $\widetilde x_{i+1}^*$ to simplify the integral, and it is obtained by solving $V^{(i+1),\mathrm{COS}}(\widetilde x_{i+1}^*)=K_{i+1}$.

Using Euler's formula, equation \eqref{multi_COS_recursion} for $V^{(i),\mathrm{COS}}(x_i)$ can be rewritten as the following trigonometric expansion
\begin{equation}\label{m2}
V^{(i),\mathrm{COS}}(x_i)
=
\sum_{k=0}^{N_{i+1}-1}{}'
\left(
A_k^{(i)}
\cos \bigl( \omega_k^{(i+1)}(x_i-a_{i+1}) \bigr)
-
B_k^{(i)}
\sin \bigl( \omega_k^{(i+1)}(x_i-a_{i+1}) \bigr)
\right),
\end{equation}
where we define
\begin{equation}\label{multi_AB_coeff}
A_k^{(i)}
\coloneqq
e^{-\rho(T_{i+1}-T_i)}
\Re
\left\{
\psi(\omega_k^{(i+1)};T_i,T_{i+1})
\right\}
H_k^{(i+1)}
,\quad
B_k^{(i)}
\coloneqq
e^{-\rho(T_{i+1}-T_i)}
\Im
\left\{
\psi(\omega_k^{(i+1)};T_i,T_{i+1})
\right\}
H_k^{(i+1)}.
\end{equation}

This trigonometric representation is essential for the analytic recursion.
Indeed, to compute $H_k^{(i+1)}$ for $i=m-1, \ldots, 0$, one substitutes the
trigonometric expansion of $V^{(i+1),\mathrm{COS}}$ into \eqref{multi_H_coeff}. The resulting integrals have the same structure as in the single-compound case and can be evaluated in closed form using trigonometric product identities. The corresponding formulas for the multi-stage coefficients are provided in Appendix~\ref{app:multistage_analytic_coeffs}.

The computational procedure can be summarized as follows. First, at $i=m$,
one computes $V^{(m),\mathrm{COS}}$ using the closed-form terminal payoff
coefficients $H_k^{(m+1)}$. These coefficients are available explicitly for a
European call payoff as well as several other types of payoffs. Then, for $i=m-1$ case, we first solve $V^{(i+1),\mathrm{COS}}(\widetilde x_{i+1}^*)=K_{i+1}$ for the numerical exercise boundary. This boundary determines the integration range in \eqref{multi_H_coeff}, from which the coefficients $H_k^{(i+1)}$ are obtained analytically. Substituting these coefficients into \eqref{multi_COS_recursion} then gives $V^{(i),\mathrm{COS}}$. Repeating this procedure backwards to $i=0$ yields the price $V^{\mathrm{COS}}(x_0) \coloneqq V^{(0),\mathrm{COS}}(x_0)$ at initial time.

As each compound step produces a trigonometric representation of the
continuation value, the analytic COS structure is preserved under backwards
recursion:
\begin{equation}
V^{(m),\mathrm{COS}}(x_m)
\rightarrow
V^{(m-1),\mathrm{COS}}(x_{m-1})
\rightarrow
\cdots
\rightarrow
V^{(0),\mathrm{COS}}(x_0).
\end{equation}

In particular, each stage requires only the solution of a scalar boundary equation and the analytic evaluation of the corresponding cosine coefficients, without introducing an additional quadrature or state-space discretization. For $m$ compound decision dates, the computational complexity of the analytic coefficient construction is
$\mathcal \mathcal{O}\left(\sum_{i=1}^{m} N_iN_{i+1}\right)$.
By comparison, an efficient quadrature-based implementation requires
$
\mathcal  \mathcal{O} \left( n_{\mathrm q}\sum_{i=1}^{m}(N_i+N_{i+1})  \right),
$
where $n_{\mathrm q}$ denotes the number of quadrature points. Although the difference in asymptotic complexity depends on the relative sizes of $N_i$ and $n_{\mathrm q}$, the analytic formulation avoids the additional quadrature error at every compounding stage. More importantly, it preserves the spectral approximation structure of the COS method, leading to the corresponding spectral convergence behavior under suitable regularity conditions, as discussed in the later Section~\ref{sec:error}.

Thus, the analytic compound COS formulation provides an efficient and fully quadrature-free framework for pricing multi-layer compound options under models whose log-process has tractable increment characteristic functions. The formulation also accommodates stage-dependent increment distributions, since a different increment characteristic function may be used on each interval $[T_i,T_{i+1}]$, provided that the independent-increment structure is preserved.

\subsection{Efficient computation of the exercise boundary} \label{sec:boundary}

We now discuss the computation of the numerical exercise boundary $\widetilde x^*$ defined by \eqref{cos_boundary}, together with its extension to multiple compound exercise dates.

Recall the simple compound case introduced in Section \ref{sec:simplecompound}.
By the trigonometric representation \eqref{C_inner_cos_tri}, the value $V_{\mathrm{inner}}^{\operatorname{COS}}(x)$ can be evaluated in $\mathcal{O}\left(N_{\mathrm{in}}\right)$ operations without numerical integration. Differentiating the finite expansion term by term gives
\begin{equation}\label{inner_derivative}
\frac{\mathrm d}{\mathrm dx}V_{\mathrm{inner}}^{\mathrm{COS}}(x)
=
\sum_{k=0}^{N_{\mathrm{in}}-1}{}'
\left(
-A_k\omega_k\sin\bigl(\omega_k(x-a')\bigr)
-
B_k\omega_k\cos\bigl(\omega_k(x-a')\bigr)
\right).
\end{equation}
The prime does not affect the $k=0$ contribution in the derivative, since
$\omega_0=0$, but we keep it for consistency with \eqref{C_inner_cos_tri}.
Thus, both the COS continuation function and its derivative can be evaluated
at essentially the same computational cost.

For a bracketing method, a strict sign change of $V_{\mathrm{inner}}^{\mathrm{COS}}(x)-K_1$ is needed. We therefore choose the outer truncation interval $[a,b]$ such that
\begin{equation}
V_{\mathrm{inner}}^{\mathrm{COS}}(a)<K_1<
V_{\mathrm{inner}}^{\mathrm{COS}}(b).
\label{boundary_bracket}
\end{equation}
Under the single-boundary assumption introduced in Section~\ref{sec:simplecompound}, this provides a valid bracket for the numerical exercise boundary $\widetilde{x}^*$. If \eqref{boundary_bracket} is not satisfied, the interval may be enlarged until a valid bracket is obtained.

A convenient initial guess $x^{(0)}$ can be obtained by approximating the inner continuation value by a discounted intrinsic-value proxy. Equating this proxy to the outer strike gives 
\begin{equation}
e^{-\rho(T-T_1)}   \left(e^{x^{(0)}}-K\right)^{+}=K_1,
\end{equation}
and hence
\begin{equation}\label{initial_guess}
x^{(0)} = \log  \left( K+K_1e^{\rho(T-T_1)} \right).
\end{equation}
This estimate is used only as an initial value for the nonlinear solver.

Using the derivative \eqref{inner_derivative}, Newton's method for solving the COS boundary equation is given by  
\begin{equation}\label{newton_update}
x^{(\ell+1)}
=
x^{(\ell)}
-
\frac{
V_{\mathrm{inner}}^{\mathrm{COS}}(x^{(\ell)})-K_1
}{
\bigl( V_{\mathrm{inner}}^{\mathrm{COS}} \bigr)' (x^{(\ell)})
}.
\end{equation}
Since $V_{\mathrm{inner}}^{\mathrm{COS}}$ is a finite trigonometric polynomial, it is smooth. If
$
\bigl( V_{\mathrm{inner}}^{\mathrm{COS}} \bigr)'  (\widetilde{x}^*) \neq 0,
$
Newton's method converges quadratically when the initial iterate is sufficiently close to $\widetilde{x}^*$. Moreover, the iteration is terminated when the absolute residual falls below a prescribed tolerance.

To ensure global robustness, we use a safeguarded Newton method. Newton updates are restricted to the current bracketing interval initialized by \eqref{boundary_bracket}. If a proposed update lies outside the current bracket, a bisection step is performed instead. Alternatively, Brent's method may be used; it combines bracketing with secant and inverse quadratic interpolation steps and does not require the analytic derivative. Since each function evaluation costs only $\mathcal O(N_{\mathrm{in}})$, the boundary computation is typically negligible relative to the outer COS summations, whose cost is $\mathcal O(N_{\mathrm{out}}N_{\mathrm{in}})$.

Finally, we consider the case of multiple compound exercise dates. In this case, the exercise boundary computed at the adjacent later stage $i+1$ may provide a useful warm start for the current stage. In practice, one may take
$
x_i^{(0)} = \widetilde x_{i+1}^*
$
as an initial guess for computing $\widetilde x_{i}^*$.
The same safeguarded Newton procedure or Brent’s method can then be applied at each stage.

\subsection{Choice of truncation intervals across compound stages}
\label{sec:interval}

In the COS method, integrals over $\mathbb{R}$ are approximated by integrals over a finite truncation interval. For single-maturity problems, a widely used choice is the cumulant-based interval proposed in \cite{FangOosterlee2008}, in which the truncation interval $[a,b]$ is constructed from the first, second, and fourth cumulants of the log-value distribution. In the recursive compound COS formulation of Section~\ref{sec:multi_compound}, stage-dependent truncation intervals $[a_i,b_i]$ are required for the successive state variables $X_i:=X(T_i)$, for $i=1,\ldots,m+1$.

Under the independent-increment assumption, a natural recursive construction is based on the cumulants of the increments
\begin{equation}
\Delta X_i:=X(T_{i+1})-X(T_i),
\qquad
i=0,1,\ldots,m.
\end{equation}
Let $c_1^{(i)}$, $c_2^{(i)}$, and $c_4^{(i)}$ denote the first, second, and fourth cumulants of $\Delta X_i$, respectively. We assume that these cumulants are finite. Following the classical COS construction, define the half-width
\begin{equation}
d_i
\coloneqq
L\sqrt{c_2^{(i)}+\sqrt{c_4^{(i)}}},  \quad \forall\ L\in[8, 12].
\end{equation}

Let $[a_0, b_0]$ be a chosen initial interval for $X(T_0)$. Using the increment representation
\begin{equation}
X(T_{i+1}) = X(T_i)+\Delta X_i, \quad i=0,\ldots, m 
\end{equation}
and given a truncation interval $[a_i,b_i]$ for $X(T_i)$, we approximate the relevant range of the increment $\Delta X_i$ by $[ c_1^{(i)}-d_i,  c_1^{(i)}+d_i ]$. The truncation interval $[a_{i+1},b_{i+1}]$ for $X(T_{i+1})$ is then obtained by adding these two intervals:
\begin{equation} \label{eq:recursive_interval}
[a_{i+1}, b_{i+1}]
=
\left[
a_i+c_1^{(i)}-d_i,\,
b_i+c_1^{(i)}+d_i
\right].
\end{equation}
This recursion can be applied successively to construct the intervals $[a_{i}, b_{i}]$, from $i=1$ to $i=m+1$.  Moreover, as the cumulants depend on both the time increment $T_{i+1}-T_i$ and the parameters of the underlying model, the constructed intervals adapt naturally to different distributions of the state dynamics.

As an alternative to the forward cumulant propagation, the numerical exercise boundaries obtained for the recursive compound formulation in Section~\ref{sec:multi_compound}, can be used to construct boundary-anchored truncation intervals backward in time.

Let $[a_0, b_0]$ be an initial interval again, and we directly construct the intervals $[a_{m+1}, b_{m+1}]$ for $X(T_{m+1}) = X(T)$ using cumulants of the increment $X(T_{m+1}) - X(0) $; thus, we can obtain the COS approximation $V^{(m), \mathrm{COS}}$. At the decision date $T_{m}$, we have a payoff 
$
\bigl( V^{(m), \mathrm{COS}}(x_{m}) - K_{m} \bigr)^+
$
which vanishes when $x_{m}\le \widetilde{x}_{m}^{*}$, and the boundary can be obtained by solving 
$$
V^{(m), \mathrm{COS}}  \left( \widetilde{x}_{m}^{*} \right)
= K_{m}.
$$

Hence, only the region to the right of the exercise boundary contributes to the payoff COS coefficients. This observation motivates the boundary-anchored interval for $X(T_{m})$,
\begin{equation}\label{eq:boundary_interval}
[a_{m},  b_{m}]
\coloneqq
\bigl[
\widetilde{x}_{m}^{*} - \delta_{m}, 
\widetilde{x}_{m}^{*} + \Delta_{m}
\bigr],
\end{equation}
where
\begin{equation}
\delta_{i}
\coloneqq
\kappa\sqrt{c_2^{(i-1)}},
\qquad
\Delta_{i}
\coloneqq
c_1^{(i-1)}
+
L\sqrt{c_2^{(i-1)}+\sqrt{c_4^{(i-1)}}}, \quad \forall \, i = m, \ldots, 1,
\end{equation}
with $\kappa\in[1,2]$ and $L\in[8,12]$. With the interval $[a_m, b_m]$, we can obtain the COS approximation $V^{(m-1), \mathrm{COS}}$, and therefore we can repeat the above procedure to obtain all the intervals $[a_{i},b_{i}]$, from $m$ to $1$.

Because the continuation value is already represented by its finite trigonometric expansion, the numerical exercise boundary is naturally computed during the backward recursion. Consequently, the proposed boundary-anchored interval requires no additional boundary computation. By excluding part of the region in which the payoff vanishes identically, it reduces the effective integration domain. As a result, it may improve the spatial resolution of the payoff kink for a fixed number of cosine terms and, in practice, can reduce the number of terms required to achieve a prescribed accuracy.

For models with independent increments, the cumulants of each increment can be obtained directly from derivatives of the logarithm of the corresponding characteristic function evaluated at the origin. Consequently, the stage-dependent truncation intervals can be constructed efficiently once the characteristic function is available in closed form.

An alternative interval construction was proposed in \cite{junike2022cos}, where the truncation interval is determined from rigorous tail bounds derived using Markov's inequality together with moments obtained from derivatives of the characteristic function. This approach provides explicit control of the truncation error and may be advantageous when the classical cumulant-based interval is insufficiently wide.

\subsection{Error analysis of the compound COS formulation}
\label{sec:error}

We now investigate the approximation error of the recursive compound COS formulation. Since the continuation representations and the payoff cosine coefficients are evaluated analytically throughout the backward recursion, the proposed method introduces neither numerical quadrature nor interpolation errors. The remaining errors arise from the truncation of the integration intervals, the finite cosine expansions, and the recursive replacement of the exact continuation values by their COS approximations.

For $i=0,\ldots,m$, let $I_i \coloneqq [a_i,b_i]$, $\Delta T_i \coloneqq T_{i+1}-T_i$,
and define the exact conditional-expectation operator
\begin{equation}
\mathcal T_i h(x_i)
\coloneqq
e^{-\rho\Delta T_i}
\mathbb E
\left[
h(X_{i+1})
\,\middle|\,
X_i=x_i
\right],
\end{equation}
where $h$ is a given payoff function. We denote by $\mathcal T_{i,N_{i+1}}^{\mathrm{COS}}$ the corresponding COS approximation based on the truncation interval $I_{i+1}$ and $N_{i+1}$ cosine terms.

For convenience, we define the exact stagewise payoff functions as in Section~\ref{sec:multi_compound} by,
\begin{equation}
G^{(m+1)}(x)
\coloneqq
\bigl(e^x-K_{m+1}\bigr)^+
,\quad
G^{(i+1)}(x)
\coloneqq
\bigl(V^{(i+1)}(x)-K_{i+1}\bigr)^+ 
,\qquad  i=m-1,\ldots,0.
\end{equation}
Although the analysis below is presented for this compound-call payoff structure, the theoretical results are not restricted to this particular form. Their extension to more general stagewise payoff transformations is discussed in Remark~\ref{remark:general_payoff}.

Hence, the exact continuation values satisfy
\begin{equation}
V^{(i)}
=
\mathcal T_i G^{(i+1)},
\qquad
i=0,\ldots,m.
\end{equation}

To distinguish the one-stage COS approximation error from the recursively propagated error, we introduce the auxiliary stagewise approximation
\begin{equation}
\overline V^{(i),\mathrm{COS}}
\coloneqq
\mathcal T_{i,N_{i+1}}^{\mathrm{COS}}
G^{(i+1)}.
\label{eq:auxiliary_stagewise_cos}
\end{equation}
Thus, $\overline V^{(i),\mathrm{COS}}$ is obtained by applying a single COS approximation step to the exact payoff $G^{(i+1)}$. In particular, when $i<m$, its payoff coefficients are conceptually constructed from the exact continuation value $V^{(i+1)}$ and the exact exercise boundary $x_{i+1}^*$. This auxiliary quantity is introduced only for the error analysis and is not computed by the recursive numerical algorithm.

By contrast, the actual recursive COS approximation satisfies
\begin{equation}
V^{(i),\mathrm{COS}}
=
\mathcal T_{i,N_{i+1}}^{\mathrm{COS}}
G^{(i+1),\mathrm{COS}},
\label{eq:actual_recursive_cos}
\end{equation}
where
\begin{equation}
G^{(m+1),\mathrm{COS}}
=
G^{(m+1)}
,\qquad
G^{(i+1),\mathrm{COS}}(x)
\coloneqq
\bigl(
V^{(i+1),\mathrm{COS}}(x)-K_{i+1}
\bigr)^+  
,\quad i=m-1,\ldots,0.
\end{equation}
 
The distinction between $\overline V^{(i),\mathrm{COS}}$ and $V^{(i),\mathrm{COS}}$ separates the error introduced by a single COS approximation from the error propagated through the recursive replacement of the continuation values.

\paragraph{Stagewise COS approximation assumption.}

The convergence properties of a single COS approximation have been studied extensively in the classical COS literature; see, for example, \cite{FangOosterlee2008,Junike2024,wang2025note}. Rather than imposing specific regularity conditions on the transition densities, we formulate the one-stage COS error directly at the operator level.

For each $i=0,\ldots,m$, assume that there exists a function
$\delta_i:\mathbb N\to[0,\infty)$ satisfying
\begin{equation}
\delta_i(N)\longrightarrow 0,
\qquad
N\longrightarrow\infty,
\end{equation}
such that
\begin{equation}
\left\|
\mathcal T_iG^{(i+1)}
-
\mathcal T_{i,N}^{\mathrm{COS}}G^{(i+1)}
\right\|_{\infty,I_i}
\leq
\delta_i(N)
+
\tau_i,
\label{eq:stagewise_cos_assumption}
\end{equation}
where
\begin{equation}
\tau_i
\coloneqq
e^{-\rho\Delta T_i}
\sup_{x\in I_i}
\left|
\int_{\mathbb R\setminus I_{i+1}}
G^{(i+1)}(y)
f_i(y\mid x)
\,\mathrm dy
\right|
\label{eq:stagewise_truncation_error}
\end{equation}
denotes the payoff-weighted truncation error, and $f_i(y\mid x)$ is the conditional density of $X_{i+1}$ given $X_i=x$.

The quantity $\delta_i(N)$ represents the one-stage COS approximation error associated with the finite cosine representation on the truncated domain. Under appropriate regularity and characteristic function decay conditions, classical COS convergence results yield spectral rates of the form
\begin{equation}
\delta_i(N)
\leq
M_i e^{-\eta_iN},
\qquad
M_i,\eta_i>0.
\label{eq:stagewise_spectral_rate}
\end{equation}
We will use this stronger condition only when deriving the spectral convergence rate of the full recursive scheme.

We next establish the stability of the stagewise COS operator.

\begin{lemma}[Uniform stability of the stagewise COS operator]
\label{lem:cos_operator_stability}

Fix $i\in\{0,\ldots,m\}$ and assume that
\begin{equation}
\sum_{k=1}^{\infty}
\left|
\psi\!\left(
\omega_k^{(i+1)};T_i,T_{i+1}
\right)
\right|
<
\infty.
\label{eq:sampled_cf_summability}
\end{equation}
Then, for every $N\geq1$ and all bounded functions
$h,\widetilde h:I_{i+1}\to\mathbb R$,
\begin{equation}
\left\|
\mathcal T_{i,N}^{\mathrm{COS}}h
-
\mathcal T_{i,N}^{\mathrm{COS}}\widetilde h
\right\|_{\infty,I_i}
\leq
\Lambda_i
\left\|
h-\widetilde h
\right\|_{\infty,I_{i+1}},
\label{eq:cos_operator_stability}
\end{equation}
where
\begin{equation}
\Lambda_i
\coloneqq
e^{-\rho\Delta T_i}
\left[
1+
\frac{4}{\pi}
\sum_{k=1}^{\infty}
\left|
\psi\!\left(
\omega_k^{(i+1)};T_i,T_{i+1}
\right)
\right|
\right].
\label{eq:cos_stability_constant}
\end{equation}
In particular, $\Lambda_i$ is independent of the number $N$ of cosine terms.

\end{lemma}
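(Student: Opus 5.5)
By linearity of the COS operator, it suffices to bound $\mathcal T_{i,N}^{\mathrm{COS}}e$ for $e:=h-\widetilde h$, a bounded function on $I_{i+1}$. Write $L:=b_{i+1}-a_{i+1}$ and $\omega_k:=\omega_k^{(i+1)}$. By \eqref{multi_COS_recursion}, for $x\in I_i$,
\begin{equation}
\mathcal T_{i,N}^{\mathrm{COS}}e(x)
=
e^{-\rho\Delta T_i}
\sum_{k=0}^{N-1}{}'
\Re\left\{
\psi(\omega_k;T_i,T_{i+1})
e^{i\omega_k(x-a_{i+1})}
\right\}
E_k,
\end{equation}
where
\begin{equation}
E_k=\frac{2}{L}\int_{a_{i+1}}^{b_{i+1}}e(y)\cos\bigl(\omega_k(y-a_{i+1})\bigr)\,\mathrm dy
\end{equation}
are the cosine coefficients of $e$. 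The plan is to bound this sum term by term, treating $k=0$ separately and the terms $k\ge1$ together.

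\textbf{The $k=0$ term.} Since $\psi(0)=1$ and $\omega_0=0$, the halved term equals $\tfrac12E_0=\tfrac1L\int e$. Its absolute value is at most $\|e\|_{\infty,I_{i+1}}$, which produces the leading $1$ inside the brackets of $\Lambda_i$.

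\textbf{The terms $k\ge1$.} Using $|\Re\{z w\}|\le|z|$ for $|w|=1$, each term is bounded by $|\psi(\omega_k)|\,|E_k|$. The matching constant $\tfrac4\pi$ requires a uniform bound $|E_k|\le\tfrac4\pi\|e\|_\infty$. The crude estimate $|E_k|\le\tfrac2L\int|\cos|\,\|e\|_\infty=2\|e\|_\infty$ is too weak by a factor of $\pi/2$. The sharper bound comes from the $L^1$ norm of cosine over complete half-periods: the substitution $u=\omega_k(y-a_{i+1})$ maps $[a_{i+1},b_{i+1}]$ onto $[0,k\pi]$. Hence
\begin{equation}
\int_{a_{i+1}}^{b_{i+1}}\bigl|\cos(\omega_k(y-a_{i+1}))\bigr|\,\mathrm dy
=\frac{L}{k\pi}\int_0^{k\pi}|\cos u|\,\mathrm du
=\frac{L}{k\pi}\cdot 2k
=\frac{2L}{\pi},
\end{equation}
and therefore $|E_k|\le \tfrac2L\cdot\tfrac{2L}{\pi}\|e\|_\infty=\tfrac4\pi\|e\|_\infty$ for every $k\ge1$. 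This is the step I expect to be the only one with real content, since it is exactly what fixes the constant $\tfrac4\pi$.

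\textbf{Conclusion.} Summing the $k\ge1$ terms for $k=1,\ldots,N-1$ and enlarging the sum to $k=1,\ldots,\infty$, which is finite by \eqref{eq:sampled_cf_summability}, gives, uniformly in $x\in I_i$,
\begin{equation}
\bigl|\mathcal T_{i,N}^{\mathrm{COS}}e(x)\bigr|
\le
e^{-\rho\Delta T_i}\Bigl[1+\tfrac4\pi\sum_{k\ge1}|\psi(\omega_k)|\Bigr]\|e\|_{\infty,I_{i+1}}
=\Lambda_i\,\|e\|_{\infty,I_{i+1}}.
\end{equation}
Because the enlarged sum does not depend on $N$, the constant $\Lambda_i$ is independent of $N$.

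Boundedness of $e$ guarantees that every $E_k$ is finite, so no regularity of $h$ or $\widetilde h$ is needed. No property of the density is used beyond the summability assumption \eqref{eq:sampled_cf_summability}.
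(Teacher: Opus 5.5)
Your proof is correct and follows essentially the same route as the paper: linearity reduces the claim to bounding $\mathcal T_{i,N}^{\mathrm{COS}}(h-\widetilde h)$, the $k=0$ term contributes at most $\|h-\widetilde h\|_{\infty,I_{i+1}}$ because $\psi(0;T_i,T_{i+1})=1$, each cosine coefficient with $k\geq1$ is bounded by $\frac{4}{\pi}\|h-\widetilde h\|_{\infty,I_{i+1}}$ using the $L^1$ norm of the cosine over $k$ complete half-periods, and the partial sum of $|\psi(\omega_k^{(i+1)};T_i,T_{i+1})|$ is then enlarged to the full series, which removes the dependence on $N$. Your explicit substitution showing that the integral of $|\cos|$ over $I_{i+1}$ equals $2(b_{i+1}-a_{i+1})/\pi$ simply supplies the computation that the paper states without detail.
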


\begin{proof}

For a bounded function $h:I_{i+1}\to\mathbb R$, define its cosine
coefficients by
\begin{equation}
H_k^{(i+1)}(h)
\coloneqq
\frac{2}{b_{i+1}-a_{i+1}}
\int_{a_{i+1}}^{b_{i+1}}
h(y)
\cos\!\left(
\omega_k^{(i+1)}(y-a_{i+1})
\right)
\,\mathrm dy.
\label{eq:general_stage_payoff_coeff}
\end{equation}
The stagewise COS operator can then be written as
\begin{equation}
\begin{aligned}
\mathcal T_{i,N}^{\mathrm{COS}}h(x)
&=
e^{-\rho\Delta T_i}
\sum_{k=0}^{N-1}{}'
\Re\!\left\{
\psi\!\left(
\omega_k^{(i+1)};T_i,T_{i+1}
\right)
e^{i\omega_k^{(i+1)}(x-a_{i+1})}
\right\}
H_k^{(i+1)}(h).
\end{aligned}
\label{eq:general_stage_cos_operator}
\end{equation}
Hence, the COS operator is linear. 

Let $g\coloneqq h-\widetilde h$. Then
$$
\mathcal T_{i,N}^{\mathrm{COS}}h
-
\mathcal T_{i,N}^{\mathrm{COS}}\widetilde h
=
\mathcal T_{i,N}^{\mathrm{COS}}g.
$$

For $k=0$, the cosine coefficient satisfies
\begin{equation}
\frac{1}{2}
\left|
H_0^{(i+1)}(g)
\right|
\leq
\|g\|_{\infty,I_{i+1}}.
\label{eq:zero_cosine_coefficient_bound}
\end{equation}

For $k\geq1$,
\begin{equation}
\begin{aligned}
\left|
H_k^{(i+1)}(g)
\right|
&\leq
\frac{2}{b_{i+1}-a_{i+1}}
\|g\|_{\infty,I_{i+1}}
\int_{a_{i+1}}^{b_{i+1}}
\left|
\cos\!\left(
\omega_k^{(i+1)}(y-a_{i+1})
\right)
\right|
\,\mathrm dy
=
\frac{4}{\pi}
\|g\|_{\infty,I_{i+1}}.
\end{aligned}
\label{eq:nonzero_cosine_coefficient_bound}
\end{equation}

Using the fact that 
$
\psi(0;T_i,T_{i+1})=1
$,
$
|\Re(z)|\leq |z|
$,
and 
$
|e^{i\theta}|=1
$,
we obtain for every $x\in I_i$,
\begin{equation}
\begin{aligned}
\left|
\mathcal T_{i,N}^{\mathrm{COS}}g(x)
\right|
&\leq
e^{-\rho\Delta T_i}
\Bigg[
\frac{1}{2}
\left|
H_0^{(i+1)}(g)
\right|
+
\sum_{k=1}^{N-1}
\left|
\psi\!\left(
\omega_k^{(i+1)};T_i,T_{i+1}
\right)
\right|
\left|
H_k^{(i+1)}(g)
\right|
\Bigg]
\\
&\leq
e^{-\rho\Delta T_i}
\Bigg[
1+
\frac{4}{\pi}
\sum_{k=1}^{N-1}
\left|
\psi\!\left(
\omega_k^{(i+1)};T_i,T_{i+1}
\right)
\right|
\Bigg]
\|g\|_{\infty,I_{i+1}}.
\end{aligned}
\end{equation}
By \eqref{eq:sampled_cf_summability},
\[
\sum_{k=1}^{N-1}
\left|
\psi\!\left(
\omega_k^{(i+1)};T_i,T_{i+1}
\right)
\right|
\leq
\sum_{k=1}^{\infty}
\left|
\psi\!\left(
\omega_k^{(i+1)};T_i,T_{i+1}
\right)
\right|.
\]
Taking the supremum over $x\in I_i$ gives
\eqref{eq:cos_operator_stability}.

\end{proof}

\begin{remark}
The summability condition \eqref{eq:sampled_cf_summability} is satisfied, for
example, when the increment characteristic function decays sufficiently fast.
A sufficient condition is
\[
\left| \psi(u;T_i,T_{i+1}) \right|
\leq  \beta_i(1+|u|)^{-p_i},  \qquad  p_i>1,
\]
for some constant $\beta_i>0$. Alternatively, it is sufficient to have
exponential-type decay of the form
\[
\left|
\psi(u;T_i,T_{i+1})
\right|
\leq
\beta_i e^{-\alpha_i|u|^{q_i}},
\qquad
\alpha_i,q_i,\beta_i>0.
\]
These conditions are satisfied, for example, by the log-increment characteristic functions of the GBM model and the classical Merton jump-diffusion model. Hence, both models satisfy \eqref{eq:sampled_cf_summability} for fixed truncation intervals. If the truncation intervals depend on the expansion sizes, we additionally assume that the resulting stability constants remain uniformly bounded with respect to the expansion sizes and the corresponding interval choices.
\end{remark}

\begin{theorem}[Recursive propagation and spectral convergence of COS errors]
\label{thm:recursive_cos_error}

Suppose that the stagewise approximation condition \eqref{eq:stagewise_cos_assumption} holds for each $i=0,\ldots,m$, and that the assumptions of Lemma~\ref{lem:cos_operator_stability} are satisfied with constants $\Lambda_i$ independent of the expansion sizes.

Define
\begin{equation}
\varepsilon_i
\coloneqq
\left\|
V^{(i)}
-
V^{(i),\mathrm{COS}}
\right\|_{\infty,I_i},
\qquad
i=0,\ldots,m.
\label{eq:recursive_error_definition}
\end{equation}
Then
\begin{equation}
\varepsilon_i
\leq
\sum_{j=i}^{m}
\left(
\prod_{\ell=i}^{j-1}\Lambda_\ell
\right)
\left[
\delta_j(N_{j+1})
+
\tau_j
\right],
\qquad
i=0,\ldots,m,
\label{eq:recursive_error_bound}
\end{equation}
where an empty product is understood to be equal to one.

In particular, suppose that the stagewise COS approximation errors satisfy
\begin{equation}
\delta_i(N)
\leq
M_i e^{-\eta_iN},
\qquad
i=0,\ldots,m,
\label{eq:stagewise_spectral_assumption}
\end{equation}
for constants $M_i,\eta_i>0$ independent of the expansion sizes. Define
$
N_{\min}
\coloneqq
\min_{1\leq j\leq m+1}N_j
$
and
$
\eta
\coloneqq
\min_{0\leq i\leq m}\eta_i
$,
then there exist constants $C>0$ and $C_{\mathrm{tr}}>0$, independent of
the expansion sizes, such that
\begin{equation}
\left|
V^{(0)}(x_0)
-
V^{(0),\mathrm{COS}}(x_0)
\right|
\leq
C e^{-\eta N_{\min}}
+
C_{\mathrm{tr}}
\max_{0\leq i\leq m}\tau_i,
\label{eq:global_recursive_cos_error}
\end{equation}

Consequently, if the truncation intervals are selected such that
\begin{equation}
\max_{0\leq i\leq m}\tau_i
=
\mathcal O\!\left(
e^{-\eta N_{\min}}
\right),
\label{eq:truncation_cos_balance}
\end{equation}
and the corresponding stability constants remain uniformly bounded with respect to the expansion sizes, then
\begin{equation}
\left|
V_0
-
V_0^{\mathrm{COS}}
\right|
=
\mathcal{O} \left(
e^{-\eta N_{\min}}
\right).
\end{equation}
Hence, under the stagewise spectral convergence condition, the recursive compound COS formulation preserves the spectral convergence rate of the individual COS approximation steps.

\end{theorem}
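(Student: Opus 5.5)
The proof is a backward induction on $i$. At each stage I split the error into a one-stage COS error and a propagated error, and control the latter with Lemma~\ref{lem:cos_operator_stability}. For fixed $i\in\{0,\ldots,m\}$, I insert the auxiliary approximation $\overline V^{(i),\mathrm{COS}}$ from \eqref{eq:auxiliary_stagewise_cos} and use the triangle inequality:
\begin{equation}
\varepsilon_i
\leq
\bigl\|\mathcal T_iG^{(i+1)}-\mathcal T_{i,N_{i+1}}^{\mathrm{COS}}G^{(i+1)}\bigr\|_{\infty,I_i}
+
\bigl\|\mathcal T_{i,N_{i+1}}^{\mathrm{COS}}G^{(i+1)}-\mathcal T_{i,N_{i+1}}^{\mathrm{COS}}G^{(i+1),\mathrm{COS}}\bigr\|_{\infty,I_i}.
\end{equation}
By the stagewise assumption \eqref{eq:stagewise_cos_assumption}, the first term is at most $\delta_i(N_{i+1})+\tau_i$. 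For the second term, the COS operator uses its argument only through the cosine coefficients \eqref{eq:general_stage_payoff_coeff} on $I_{i+1}$. So Lemma~\ref{lem:cos_operator_stability} applies and bounds it by $\Lambda_i\|G^{(i+1)}-G^{(i+1),\mathrm{COS}}\|_{\infty,I_{i+1}}$.

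Next I bound this payoff difference. For $i=m$ it vanishes, because $G^{(m+1),\mathrm{COS}}=G^{(m+1)}$; this gives the base case $\varepsilon_m\le\delta_m(N_{m+1})+\tau_m$. For $i<m$, the map $z\mapsto(z-K_{i+1})^+$ is $1$-Lipschitz, so pointwise on $I_{i+1}$
\begin{equation}
\bigl|G^{(i+1)}(x)-G^{(i+1),\mathrm{COS}}(x)\bigr|\le\bigl|V^{(i+1)}(x)-V^{(i+1),\mathrm{COS}}(x)\bigr|.
\end{equation}
Hence the difference is at most $\varepsilon_{i+1}$. It matters here that $\varepsilon_{i+1}$ is measured on exactly the interval $I_{i+1}$ that the stage-$i$ operator integrates over, so no domain mismatch arises. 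I will also note that, under the single-boundary assumption, the boundary-restricted coefficients \eqref{multi_H_coeff} coincide with the cosine coefficients of $G^{(i+1),\mathrm{COS}}$ over all of $I_{i+1}$. This justifies the identity \eqref{eq:actual_recursive_cos} used above.

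Combining the two bounds gives the recursion $\varepsilon_i\le\delta_i(N_{i+1})+\tau_i+\Lambda_i\varepsilon_{i+1}$. Unrolling it from $i=m$ downward by induction yields \eqref{eq:recursive_error_bound}, with the empty-product convention covering the $j=i$ term. For the spectral statement, assuming $x_0\in I_0$, I bound $|V^{(0)}(x_0)-V^{(0),\mathrm{COS}}(x_0)|\le\varepsilon_0$. I then use $\delta_j(N_{j+1})\le M_je^{-\eta_jN_{j+1}}\le M_je^{-\eta N_{\min}}$. This gives \eqref{eq:global_recursive_cos_error} with
\begin{equation}
C=\sum_{j=0}^m\Bigl(\prod_{\ell=0}^{j-1}\Lambda_\ell\Bigr)M_j,
\qquad
C_{\mathrm{tr}}=\sum_{j=0}^m\prod_{\ell=0}^{j-1}\Lambda_\ell,
\end{equation}
where I bound each $\tau_j$ by $\max_i\tau_i$. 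Both constants are independent of the $N_j$ because the $\Lambda_\ell$ are. The final $\mathcal O(e^{-\eta N_{\min}})$ claim then follows directly from \eqref{eq:truncation_cos_balance}.

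The main difficulty is not the algebra but keeping the function spaces consistent. The stability lemma must be applied with sup-norms on $I_{i+1}$, and the recursive error must be measured on the same interval. One also has to check that the computed coefficients really are those of $(V^{(i+1),\mathrm{COS}}-K_{i+1})^+$. That check is exactly where the single-boundary assumption on $\widetilde x_{i+1}^*$ enters.
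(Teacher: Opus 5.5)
Your proposal is correct and follows the paper's proof essentially step for step. You use the same split through the auxiliary $\overline V^{(i),\mathrm{COS}}$, bound the one-step error with the stagewise assumption, bound the propagated error with Lemma~\ref{lem:cos_operator_stability} and the $1$-Lipschitz positive part, unroll $\varepsilon_i\le\delta_i(N_{i+1})+\tau_i+\Lambda_i\varepsilon_{i+1}$ backward, and use $x_0\in I_0$ for the pointwise bound. Your explicit $C$ and $C_{\mathrm{tr}}$, and your observation that the single-boundary assumption is what makes \eqref{multi_H_coeff} the cosine coefficients of $G^{(i+1),\mathrm{COS}}$, are small clarifications the paper leaves implicit.
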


\begin{proof}

At the final continuation stage $i=m$, the terminal payoff is known exactly, so that $G^{(m+1),\mathrm{COS}}  =  G^{(m+1)}$. Consequently,
\[
V^{(m),\mathrm{COS}}  =  \overline V^{(m),\mathrm{COS}}.
\]

The stagewise approximation condition therefore gives
\begin{equation}
\varepsilon_m  \leq  \delta_m(N_{m+1})  +  \tau_m.
\end{equation}

Now consider $i<m$. Adding and subtracting the auxiliary approximation
$\overline V^{(i),\mathrm{COS}}$ gives
\begin{equation}
\begin{aligned}
\varepsilon_i
\leq
\left\|
V^{(i)}
-
\overline V^{(i),\mathrm{COS}}
\right\|_{\infty,I_i}
+
\left\|
\overline V^{(i),\mathrm{COS}}
-
V^{(i),\mathrm{COS}}
\right\|_{\infty,I_i}.
\end{aligned}
\label{eq:error_decomposition}
\end{equation}

By \eqref{eq:stagewise_cos_assumption}, the first term satisfies
\begin{equation}
\left\|
V^{(i)}
-
\overline V^{(i),\mathrm{COS}}
\right\|_{\infty,I_i}
\leq
\delta_i(N_{i+1})
+
\tau_i.
\label{eq:first_stagewise_error_term}
\end{equation}

For the second term, Lemma~\ref{lem:cos_operator_stability} gives
\begin{equation}
\left\|
\overline V^{(i),\mathrm{COS}}
-
V^{(i),\mathrm{COS}}
\right\|_{\infty,I_i}
\leq
\Lambda_i
\left\|
G^{(i+1)}
-
G^{(i+1),\mathrm{COS}}
\right\|_{\infty,I_{i+1}}.
\label{eq:recursive_payoff_error}
\end{equation}

The positive-part function is $1$-Lipschitz:
\begin{equation}
\left|
(u-K)^+
-
(v-K)^+
\right|
\leq
|u-v|,
\qquad
u,v\in\mathbb R.
\label{eq:positive_part_lipschitz}
\end{equation}

Hence,
\begin{equation}
\begin{aligned}
\left|
G^{(i+1)}(x)
-
G^{(i+1),\mathrm{COS}}(x)
\right|
&=
\left|
\bigl(V^{(i+1)}(x)-K_{i+1}\bigr)^+
-
\bigl(V^{(i+1),\mathrm{COS}}(x)-K_{i+1}\bigr)^+
\right|
\\
&\leq
\left|
V^{(i+1)}(x)
-
V^{(i+1),\mathrm{COS}}(x)
\right|.
\end{aligned}
\label{eq:payoff_error_lipschitz_bound}
\end{equation}

Taking the supremum over $I_{i+1}$ gives
\begin{equation}
\left\|
G^{(i+1)}
-
G^{(i+1),\mathrm{COS}}
\right\|_{\infty,I_{i+1}}
\leq
\varepsilon_{i+1}.
\label{eq:payoff_error_by_continuation_error}
\end{equation}

Combining
\eqref{eq:error_decomposition},
\eqref{eq:first_stagewise_error_term},
\eqref{eq:recursive_payoff_error}, and
\eqref{eq:payoff_error_by_continuation_error}
yields
\begin{equation}
\varepsilon_i
\leq
\delta_i(N_{i+1})
+
\tau_i
+
\Lambda_i\varepsilon_{i+1},
\qquad
i<m.
\label{eq:one_step_recursive_error}
\end{equation}

Applying \eqref{eq:one_step_recursive_error} recursively backward from
$i=m$ gives \eqref{eq:recursive_error_bound}.

Now suppose that \eqref{eq:stagewise_spectral_assumption} holds. Setting $i=0$ in \eqref{eq:recursive_error_bound} gives
\begin{equation}
\varepsilon_0
\leq
\sum_{j=0}^{m}
\left(
\prod_{\ell=0}^{j-1}\Lambda_\ell
\right)
\left[
\delta_j(N_{j+1})
+
\tau_j
\right].
\end{equation}

Since
\[
\delta_j(N_{j+1})
\leq
M_j e^{-\eta_jN_{j+1}}
\leq
M_j e^{-\eta N_{\min}},
\]
and the number of compound stages is fixed, all finite products and sums involving the stability constants are independent of the expansion sizes. Hence, there exist constants $C>0$ and $C_{\mathrm{tr}}>0$ such that
\begin{equation}
\varepsilon_0
\leq
C e^{-\eta N_{\min}}
+
C_{\mathrm{tr}}
\max_{0\leq i\leq m}\tau_i.
\end{equation}

Since $x_0\in I_0$, 
$$
\left|
V^{(0)}(x_0)
-
V^{(0),\mathrm{COS}}(x_0)
\right|
\leq
\varepsilon_0,
$$
which proves \eqref{eq:global_recursive_cos_error}. The final conclusion
follows immediately from \eqref{eq:truncation_cos_balance}.

\end{proof}

\begin{remark}\label{remark:general_payoff}
The recursive error analysis is not restricted to compound call options.
The specific form of the stagewise payoff enters the proof only through the
$1$-Lipschitz property of the positive-part mapping
$
u\mapsto (u-K)^+.
$
More generally, suppose that at stage $i+1$ the payoff can be written as
$$
G^{(i+1)}(x) = \Gamma_{i+1}\!\left(V^{(i+1)}(x)\right),
$$
where $\Gamma_{i+1}:\mathbb R\to\mathbb R$ is Lipschitz continuous with constant $L_{i+1}$. Then, following the same argument in the proof of Theorem \ref{thm:recursive_cos_error}, we can show that 
$$
\varepsilon_i \leq \delta_i(N_{i+1}) + \tau_i + \Lambda_i L_{i+1}\varepsilon_{i+1}.
$$
Hence, provided that the corresponding stagewise COS approximation assumptions hold, the recursive convergence analysis extends directly to more general Lipschitz stagewise payoff transformations. 
\end{remark}

\begin{corollary}[Convergence of the numerical exercise boundaries]
\label{cor:boundary_convergence}

Fix $i\in\{1,\ldots,m\}$, and let $x_i^*$ denote the exact exercise
boundary satisfying
\[
V^{(i)}(x_i^*)=K_i.
\]
Assume that there exists $r_i>0$ such that the closed neighbourhood
\begin{equation}
\mathcal U_i
\coloneqq
[x_i^*-r_i,x_i^*+r_i]
\subset
I_i=[a_i,b_i],
\label{eq:boundary_neighbourhood}
\end{equation}
and that
\begin{equation}
V^{(i)\prime}(x)
\geq
m_i>0,
\qquad
x\in\mathcal U_i.
\label{eq:boundary_derivative_lower_bound}
\end{equation}

If the continuation-value error satisfies
$
\varepsilon_i
\coloneqq
\left\|
V^{(i)}
-
V^{(i),\mathrm{COS}}
\right\|_{\infty,I_i}
<
m_i r_i, 
$
then there exists a numerical exercise boundary
$
\widetilde x_i^{\,*}
\in
(x_i^*-r_i,x_i^*+r_i)
$
satisfying
$$
V^{(i),\mathrm{COS}}(\widetilde x_i^{\,*})=K_i.
$$
Under the single-boundary assumption introduced previously, this numerical boundary is unique in $I_i$. Moreover,
\begin{equation}
\left|
\widetilde x_i^{\,*}-x_i^*
\right|
\leq
\frac{\varepsilon_i}{m_i}
\leq
\frac{1}{m_i}
\sum_{j=i}^{m}
\left(
\prod_{\ell=i}^{j-1}\Lambda_\ell
\right)
\left[
\delta_j(N_{j+1})
+
\tau_j
\right].
\label{eq:global_boundary_error}
\end{equation}

In particular, under the spectral convergence conditions of Theorem~\ref{thm:recursive_cos_error}, if the truncation errors decay at the same spectral rate as the stagewise COS approximation errors, then
\begin{equation}
\left|
\widetilde x_i^{\,*}-x_i^*
\right|
=
\mathcal O\!\left(
e^{-\eta N_{\min}}
\right).
\end{equation}
\end{corollary}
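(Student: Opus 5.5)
The argument is a one-dimensional perturbation (intermediate value) argument on the neighbourhood $\mathcal U_i$, combined with Theorem~\ref{thm:recursive_cos_error}.

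\textbf{Existence.} Define $\phi(x)\coloneqq V^{(i),\mathrm{COS}}(x)-K_i$. This function is continuous, since $V^{(i),\mathrm{COS}}$ is a finite trigonometric polynomial by \eqref{m2}.

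By \eqref{eq:boundary_derivative_lower_bound} and the mean value theorem, $V^{(i)}$ is differentiable on $\mathcal U_i$. Hence for every $s\in(0,r_i]$,
\[
V^{(i)}(x_i^*+s)-K_i\geq m_i s,
\qquad
V^{(i)}(x_i^*-s)-K_i\leq -m_i s .
\]
Since $\mathcal U_i\subset I_i$, the uniform bound $|V^{(i)}-V^{(i),\mathrm{COS}}|\leq\varepsilon_i$ holds on $\mathcal U_i$. It follows that
\[
\phi(x_i^*+s)\geq m_i s-\varepsilon_i,
\qquad
\phi(x_i^*-s)\leq -m_i s+\varepsilon_i .
\]

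Fix any $s$ with $\varepsilon_i/m_i<s\leq r_i$. Such an $s$ exists because $\varepsilon_i<m_ir_i$. Then $\phi(x_i^*-s)<0<\phi(x_i^*+s)$. By the intermediate value theorem there is a zero $\widetilde x_i^{\,*}\in(x_i^*-s,x_i^*+s)\subseteq(x_i^*-r_i,x_i^*+r_i)$.

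\textbf{Uniqueness.} Uniqueness in $I_i$ is exactly the standing single-boundary assumption, which applies once a solution in $(a_i,b_i)$ exists. This holds because $\mathcal U_i\subset I_i$ and the zero lies in the open interval.

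\textbf{Error bound.} Letting $s\downarrow\varepsilon_i/m_i$ in the existence step gives a zero in $[x_i^*-\varepsilon_i/m_i,\,x_i^*+\varepsilon_i/m_i]$ for every admissible $s$. By uniqueness, these zeros are all the same point, so $|\widetilde x_i^{\,*}-x_i^*|\leq\varepsilon_i/m_i$.

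A more direct route is also available. Since $\widetilde x_i^{\,*}\in\mathcal U_i$, the mean value theorem gives
\[
m_i|\widetilde x_i^{\,*}-x_i^*|\leq|V^{(i)}(\widetilde x_i^{\,*})-V^{(i)}(x_i^*)|=|V^{(i)}(\widetilde x_i^{\,*})-V^{(i),\mathrm{COS}}(\widetilde x_i^{\,*})|\leq\varepsilon_i .
\]
Here I use $V^{(i)}(x_i^*)=K_i=V^{(i),\mathrm{COS}}(\widetilde x_i^{\,*})$ and the fact that the derivative lower bound holds on the whole segment between the two points, which lies in $\mathcal U_i$. I will present this direct version.

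The second inequality in \eqref{eq:global_boundary_error} is then just \eqref{eq:recursive_error_bound} from Theorem~\ref{thm:recursive_cos_error}.

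\textbf{Spectral rate.} Under \eqref{eq:stagewise_spectral_assumption} and $\tau_j=\mathcal O(e^{-\eta N_{\min}})$, the sum in \eqref{eq:global_boundary_error} is $\mathcal O(e^{-\eta N_{\min}})$, exactly as in the proof of the theorem. Since $m_i$ is independent of the expansion sizes, the stated rate follows. For $N_{\min}$ large, this rate also guarantees the hypothesis $\varepsilon_i<m_ir_i$ automatically.

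The only delicate point is placing $\widetilde x_i^{\,*}$ inside $\mathcal U_i$ before applying the derivative bound. This is why existence via the bracket is established first, and uniqueness is used to identify that bracketed zero with the computed boundary.
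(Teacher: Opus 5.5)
Your proposal is correct and follows the paper's proof essentially step by step. Both arguments bracket the root via the mean value theorem and $\varepsilon_i<m_ir_i$, obtain existence from the intermediate value theorem, uniqueness from the single-boundary assumption, and the bound $\varepsilon_i/m_i$ from the mean value theorem between $x_i^*$ and $\widetilde x_i^{\,*}$ in $\mathcal U_i$, then cite Theorem~\ref{thm:recursive_cos_error} for the remaining inequality and the rate. One small wording point: differentiability of $V^{(i)}$ on $\mathcal U_i$ is implicit in the derivative lower-bound hypothesis and is not a consequence of the mean value theorem.
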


\begin{proof}

Since $V^{(i)}(x_i^*)=K_i$, the mean-value theorem together with
\eqref{eq:boundary_derivative_lower_bound} gives
\begin{equation}
V^{(i)}(x_i^*-r_i)-K_i
\leq
-m_i r_i,
\qquad
V^{(i)}(x_i^*+r_i)-K_i
\geq
m_i r_i.
\end{equation}
The condition $\varepsilon_i<m_i r_i$ ensures that these signs are preserved
by the COS approximation. Indeed,
\begin{equation}
\begin{aligned}
V^{(i),\mathrm{COS}}(x_i^*-r_i)-K_i
\leq
V^{(i)}(x_i^*-r_i)-K_i+\varepsilon_i
\leq
-m_i r_i+\varepsilon_i
<0,
\end{aligned}
\end{equation}
while
\begin{equation}
\begin{aligned}
V^{(i),\mathrm{COS}}(x_i^*+r_i)-K_i
\geq
V^{(i)}(x_i^*+r_i)-K_i-\varepsilon_i
\geq
m_i r_i-\varepsilon_i
>0.
\end{aligned}
\end{equation}

Since $V^{(i),\mathrm{COS}}$ is continuous, the intermediate value theorem implies that there exists at least one
$
\widetilde x_i^{\,*}
\in
(x_i^*-r_i,x_i^*+r_i)
$
such that
$
V^{(i),\mathrm{COS}}(\widetilde x_i^{\,*})=K_i.
$
By the single-boundary assumption, whenever such a solution exists in $I_i$, it is unique. Hence, $\widetilde x_i^{\,*}$ is the unique numerical exercise boundary in $I_i$.

Since both $x_i^*$ and $\widetilde x_i^{\,*}$ lie in $\mathcal U_i$, the mean-value theorem implies that there exists a point $\xi_i$ between them such that
\begin{equation}
\left|
V^{(i)}(\widetilde x_i^{\,*})
-
V^{(i)}(x_i^*)
\right|
=
V^{(i)\prime}(\xi_i)
\left|
\widetilde x_i^{\,*}-x_i^*
\right|.
\label{eq:boundary_mvt}
\end{equation}
Using
$$
V^{(i),\mathrm{COS}}(\widetilde x_i^{\,*})
=
K_i
=
V^{(i)}(x_i^*),
$$
we obtain
\begin{equation}
\begin{aligned}
\left|
V^{(i)}(\widetilde x_i^{\,*})
-
V^{(i)}(x_i^*)
\right|
=
\left|
V^{(i)}(\widetilde x_i^{\,*})
-
V^{(i),\mathrm{COS}}(\widetilde x_i^{\,*})
\right|
\leq
\varepsilon_i.
\end{aligned}
\end{equation}
Since
$
V^{(i)\prime}(\xi_i)\geq m_i,
$
it follows from \eqref{eq:boundary_mvt} that
\[
\left|
\widetilde x_i^{\,*}-x_i^*
\right|
\leq
\frac{\varepsilon_i}{m_i}.
\]
The second inequality in \eqref{eq:global_boundary_error} follows directly from Theorem~\ref{thm:recursive_cos_error}. The spectral convergence statement is then an immediate consequence of the spectral special case established in the same theorem.

\end{proof}

We emphasize that the displacement of the exercise boundary does not introduce a separate contribution to the option-value error. Under the single-boundary condition, integration from the numerical boundary $\widetilde x_i^{\,*}$ is simply an analytic representation of the approximate payoff
$
\bigl(  V^{(i),\mathrm{COS}}-K_i  \bigr)^+.
$
The difference between the exact and approximate payoffs is already controlled by the $1$-Lipschitz property of the positive-part function and is therefore included in the recursive continuation-value error.

The preceding analysis assumes that each numerical boundary equation is solved exactly. If the nonlinear solver is terminated at a nonzero tolerance, the resulting root-finding error constitutes an additional numerical error. The solver tolerance should therefore be chosen sufficiently small relative to the COS approximation error so that it does not affect the overall convergence rate.

\section{Other options within the analytic COS framework}
\label{s3}

In this section, we provide two further examples—chooser options and Bermudan put options—to demonstrate that the analytic COS framework developed in Section~\ref{s2} is not limited to standard compound call options. These examples show that the proposed coefficient construction extends to a broader class of derivatives whose intermediate values can be decomposed into several analytically tractable terms.
Throughout this section, $[a,b]$ denotes the COS truncation interval
for the log-price $X(T_1)$ at the intermediate decision date.

\subsection{Chooser options}

A chooser option grants its holder, at an intermediate decision date \(0<T_1<T\), the right to choose whether the contract becomes a European call or a European put with common maturity \(T\) and strike \(K\). Its value at time \(T_1\) is therefore
\begin{equation}
\begin{aligned}
&\max\left(
V_{\mathrm{call}}(T_1,S(T_1);T,K),
V_{\mathrm{put}}(T_1,S(T_1);T,K)
\right)
\\
=&
V_{\mathrm{call}}(T_1,S(T_1);T,K)
+
\left(
V_{\mathrm{put}}(T_1,S(T_1);T,K)
-
V_{\mathrm{call}}(T_1,S(T_1);T,K)
\right)^+,
\end{aligned}
\end{equation}
where \(V_{\mathrm{call}}(T_1,S(T_1);T,K)\) and
\(V_{\mathrm{put}}(T_1,S(T_1);T,K)\) denote the corresponding European call and put values at time \(T_1\). Hence, the chooser option value decomposes into the value of a European call and an additional non-negative term determined by the difference between the European put and call values.

Let us now formulate the problem within the COS framework. Define the log-price process \(X(t)\coloneqq \log S(t)\) and
\begin{equation}
V_{\mathrm{call}}(x_1)
\coloneqq
V_{\mathrm{call}}(T_1,e^{x_1};T,K),
\qquad
V_{\mathrm{put}}(x_1)
\coloneqq
V_{\mathrm{put}}(T_1,e^{x_1};T,K).
\end{equation}
The payoff at \(T_1\) can then be written as
\begin{equation}\label{chooser_payoff_3}
V_{\mathrm{call}}(x_1)
+
\left(
V_{\mathrm{put}}(x_1) - V_{\mathrm{call}}(x_1)
\right)^+.
\end{equation}

For the first term in \eqref{chooser_payoff_3}, we can apply the COS method for \(V_{\mathrm{call}}(x_1)\) and therefore obtain a trigonometric representation the same as equation \eqref{C_inner_cos_tri}, and accordingly, the outer cosine coefficients of this term have closed-form expressions that are given in Appendix \ref{app:detail_Hout}.

The second term depends on whether the put value exceeds the call value. This induces a switching boundary \(x_1^*\), determined by
$
V_{\mathrm{call}}(x_1^*) = V_{\mathrm{put}}(x_1^*).
$
Under the standard no-arbitrage assumptions, with a non-dividend-paying underlying and constant interest rate \(r\), put--call parity gives
\begin{equation} 
V_{\mathrm{put}}(x_1) - V_{\mathrm{call}}(x_1)
=
K\exp\left(-r(T-T_1)\right)-\exp(x_1).
\end{equation}
Consequently, the switching boundary is explicitly given by
$
x_1^* =\log K-r(T-T_1)
$.

Therefore, the outer cosine coefficients associated with
\(
\left( V_{\mathrm{put}}(x_1) - V_{\mathrm{call}}(x_1) \right)^+
\)
can be written as
\begin{equation}
\frac{2}{b-a}
\int_a^{x_1^*}
\bigl(V_{\mathrm{put}}(x_1)-V_{\mathrm{call}}(x_1)\bigr)
\cos\!\left(\frac{n\pi(x_1-a)}{b-a}\right)\,dx_1.
\end{equation}
Since the integrand is linear in the two continuation values and both admit trigonometric representations, the integral can be evaluated analytically, and the closed-form expressions can be derived similarly to the Appendix \ref{app:detail_Hout}.

The complete outer cosine coefficients are obtained by adding the coefficients of the two terms in \eqref{chooser_payoff_3}. Consequently, the chooser option can be treated within the analytic COS framework developed in Section~\ref{sec:simplecompound}.

\subsection{Bermudan put options}

Let us consider a Bermudan put option with one early exercise date \(0<T_1<T\). The extension to multiple early exercise dates is straightforward. At the early exercise date \(T_1\), the option value is given by
\begin{equation}\label{bermudan_payoff}
\max\left(
\left(K-S(T_1)\right)^+,
V_{\mathrm{put}}(T_1,S(T_1);T,K)
\right),
\end{equation}
and similarly, we can formulate this problem within the COS framework and rewrite the option value at \(T_1\) as
\begin{equation}\label{bermudan_payoff_log}
\max\left(
\left(K-e^{x_1}\right)^+,
V_{\mathrm{put}}(x_1)
\right)
=
V_{\mathrm{put}}(x_1)
+
\left(
\left(K-e^{x_1}\right)^+
-
V_{\mathrm{put}}(x_1)
\right)^+.
\end{equation}

For the first term on the right-hand side of \eqref{bermudan_payoff_log}, we first apply the COS method to \(V_{\mathrm{put}}(x_1)\), which yields a trigonometric representation similar to that in equation \eqref{C_inner_cos_tri}. Consequently, the outer cosine coefficients of this term can be evaluated analytically, and the formulas can be obtained similarly to Section~\ref{sec:simplecompound} and Appendix \ref{app:detail_Hout}.

Under the usual assumptions for a Bermudan put, the exercise boundary
satisfies \(x_1^*<\log K\), and is therefore determined by
\begin{equation} 
K-e^{x_1^*}
=
V_{\mathrm{put}}(x_1^*).
\end{equation}
In contrast to the switching boundary of the chooser option, the Bermudan exercise boundary generally does not admit an explicit expression and must therefore be determined numerically. Under the Black--Scholes framework, or under suitable monotonicity assumptions for more general asset dynamics, this boundary is unique. Immediate exercise is optimal for \( x_1\leq x_1^* \), whereas continuation is optimal for \( x_1> x_1^* \).

Hence,
\begin{equation}\label{bermudan_positive_term}
\left(
\left(K-e^{x_1}\right)^+
-
V_{\mathrm{put}}(x_1)
\right)^+
=
\begin{cases}
K-e^{x_1}-V_{\mathrm{put}}(x_1), & x_1 \leq x_1^*,\\ 
0, & x_1 > x_1^*.
\end{cases}
\end{equation}

Therefore, the outer cosine coefficients associated with the second term in \eqref{bermudan_payoff_log} can be written as
\begin{equation}\label{bermudan_coeff}
\begin{aligned}
& \frac{2}{b-a}
\int_{a}^{x_1^*}
\left(
K - e^{x_1} - V_{\mathrm{put}}(x_1)
\right)
\cos  
\left(
\frac{n\pi(x_1-a)}{b-a}
\right)
\,\mathrm{d}x_1 \\
= &
\frac{2}{b-a}
\int_{a}^{x_1^*}
\left( K-V_{\mathrm{put}}(x_1) \right)
\cos 
\left(
\frac{n\pi(x_1-a)}{b-a}
\right)
\,\mathrm{d}x_1
-
\frac{2}{b-a}
\int_{a}^{x_1^*}
e^{x_1}
\cos 
\left(
\frac{n\pi(x_1-a)}{b-a}
\right)
\,\mathrm{d}x_1.
\end{aligned}
\end{equation}
For the first term on the right-hand side of \eqref{bermudan_coeff}, \(V_{\mathrm{put}}(x)\) is represented by a trigonometric expansion, while \(K\) is constant. The corresponding integral can therefore be evaluated analytically, and the formulas can be obtained similarly to Section~\ref{sec:simplecompound} and Appendix~\ref{app:detail_Hout}. It is also straightforward to check that the second integral also admits a closed-form expression.

The complete outer cosine coefficients are obtained by adding the coefficients of the European continuation value and those in \eqref{bermudan_coeff}, all of which are available in closed form. Consequently, a Bermudan put option with one early exercise date can be treated within the analytic COS framework developed in Section~\ref{sec:simplecompound}. Finally, we can simply repeat the process for Bermudan options with multiple exercise dates,  as all the required coefficients admit closed-form expressions.

\section{Numerical experiments}\label{s4}

This section evaluates the performance of the fully analytic compound COS formulation derived in Section~\ref{s2}. The numerical experiments serve three purposes. We validate the analytic COS recursion by comparing results with known benchmarks under the Black--Scholes GBM dynamics. Moreover, we assess convergence and computational efficiency when replacing the quadrature-based construction of the outer COS coefficients by analytic formulas for the cosine coefficients. Finally, we demonstrate the flexibility of the framework by incorporating non-Gaussian dynamics through characteristic-function-based models, including clustered jump arrivals.

All computations are performed using the analytic compound COS recursion developed in Section~\ref{s2}. Integration intervals are selected using the cumulant-based construction described in Subsection~\ref{sec:interval}. The exercise boundary $x^*$ is obtained from
$
V_{\mathrm{inner}}^{\mathrm{COS}}(x^*)=K_1,
$
and is solved numerically.
In all experiments, the exercise boundary is computed using Brent's root-finding method on the COS integration interval $[a,b]$.

Absolute and relative tolerances are set to $10^{-12}$, so boundary errors are negligible relative to the spectral truncation error.  In practice, convergence is obtained within a few iterations. All reported CPU times include the construction of the COS integration interval and the evaluation of the cumulants required for that construction, as well as the COS recursion. The nonlinear solver overhead is negligible at the stated tolerances.

\subsection{Asset dynamics and characteristic functions}
\label{sec:asset_dynamics}

The compound COS framework requires the conditional characteristic function of the log-value process over each valuation interval. We therefore consider three dynamics for which this characteristic function is available in closed form:
\begin{itemize}
\item geometric Brownian motion (GBM),
\item the Merton jump--diffusion, and
\item clustered jump dynamics driven by the Queue--Hawkes (Q-Hawkes) process of \cite{arias2025heston}.
\end{itemize}
Since GBM and the Merton jump--diffusion arise as simpler limiting cases of the Q-Hawkes specification, we focus on the Q-Hawkes process and briefly discuss the other two models afterwards.

The Q-Hawkes process, introduced into an option-pricing framework in \cite{arias2025heston}, provides an analytically tractable alternative to the classical Hawkes process \cite{Hawkes1971,EmbrechtsHawkes2011}. Both models capture self-excitation: the occurrence of a jump temporarily increases the likelihood of further jumps and thereby generates clustered jump activity. In the classical Hawkes model, the effect of past jumps decays continuously through a deterministic memory kernel, and the
corresponding characteristic function generally does not admit a closed-form expression. In the Q-Hawkes model, the decay of past excitations is instead governed by a stochastic expiration mechanism. This construction preserves the main qualitative features of Hawkes dynamics while yielding a closed-form joint transform of the activation and jump processes.

In a real-option setting, the underlying project value is generally not a traded asset. Valuation therefore need not be carried out under a risk-neutral measure. We model the project value under a physical or project-value measure with diffusion drift $\mu$ and discount expected future cash flows at a project-specific rate $\rho$. Let $S(t)>0$ denote the project value and define its log-value by $X(t)=\log S(t)$. We assume that the log-value evolves according to
\begin{equation}
X(t)
= \log S_0 + \left(\mu-\frac12\sigma^2\right)t + \sigma W(t) + M(t),
\label{eq:qhawkes_logprice_constvol}
\end{equation}
where $W$ is a standard Brownian motion and
\begin{equation}
M(t)
=
\sum_{j=1}^{N(t)}Y_j
\label{eq:qhawkes_jump_component}
\end{equation}
is the cumulative log-jump component. Here, $N(t)$ counts the number of jumps up to time $t$, while $\{Y_j\}_{j\geq1}$ are independent and identically distributed log-jump sizes with characteristic function $\psi_Y(v)$. Thus, each jump time carries a random mark $Y_j$, and a jump of size $Y_j$ changes the project value from $S(t-)$ to $S(t-)e^{Y_j}$.

We use the uncompensated jump specification \eqref{eq:qhawkes_jump_component}. Consequently, $\mu$ represents the drift of the continuous diffusion component, while the jumps contribute separately to the expected growth of the project value. We assume that the Brownian motion $W$, the jump sizes $\{Y_j\}_{j\geq1}$, and the counting processes introduced below are mutually independent.

Jump arrivals are governed by the Queue--Hawkes process of
\cite{arias2025heston}. Its intensity is
\begin{equation}
\lambda(t)
=
\lambda^*+\alpha Q(t),
\label{eq:qhawkes_intensity}
\end{equation}
where $\lambda^*>0$ is the baseline jump intensity, $\alpha>0$ measures the strength of self-excitation, and $Q(t)\in\mathbb{N}_0$ is the activation number. The activation process evolves according to
\begin{equation}
\mathrm{d}Q(t)
=
\mathrm{d}N(t)-\mathrm{d}N^Q(t),
\label{eq:qhawkes_activation}
\end{equation}
where $N$ has stochastic intensity $\lambda(t)$ and $N^Q$ is an expiration process with intensity $\beta Q(t)$, for $\beta>0$.

Each jump of $N$ increases $Q$ by one and therefore raises the jump intensity by $\alpha$, making further jumps temporarily more likely. Conversely, each jump of $N^Q$ decreases $Q$ by one and removes one active excitation. Hence, both the accumulation and the decay of jump activity occur randomly. This differs from the classical Hawkes
process, in which the effect of previous jumps decays continuously through a deterministic memory kernel. We work in the stable regime $\beta>\alpha$.

The principal analytical advantage of the Q-Hawkes specification is that the joint conditional characteristic function of the activation process and the marked jump increment is available in closed form. Because the pair $(Q,M)$ is Markov and $M$ enters through its increment, the conditional distribution over $[s,t]$ depends on the past only through the current activation state $Q(s)$. For $\tau=t-s$ and
$q=Q(s)$, define
\begin{equation}
\psi_{Q,M}(u,v;\tau\mid q)
=
\mathbb{E}\!\left[
\exp\!\left(
iuQ(t)+iv\bigl(M(t)-M(s)\bigr)
\right)
\mid Q(s)=q
\right].
\label{eq:qhawkes_joint_cf_definition}
\end{equation}

Specializing Proposition~6 of \cite{arias2025heston} to the uncompensated jump component \eqref{eq:qhawkes_jump_component}, this joint conditional characteristic function is given explicitly by
\begin{equation}
\begin{aligned}
\psi_{Q,M}(u,v;\tau\mid q)
={}&
\exp\!\left[
\frac{\lambda^*\tau}{2\alpha}
\bigl(\beta-\alpha-f(v)\bigr)
\right]
\\
&\times
\left(
\frac{2f(v)}
{
f(v)+g(u,v)
+
e^{-\tau f(v)}
\bigl(f(v)-g(u,v)\bigr)
}
\right)^{\lambda^*/\alpha}
\\
&\times
\left(
\frac{
\bigl(1-e^{-\tau f(v)}\bigr)
\bigl(2\beta-e^{iu}(\beta+\alpha)\bigr)
+
e^{iu}f(v)
\bigl(1+e^{-\tau f(v)}\bigr)
}
{
f(v)+g(u,v)
+
e^{-\tau f(v)}
\bigl(f(v)-g(u,v)\bigr)
}
\right)^q,
\end{aligned}
\label{eq:qhawkes_joint_cf}
\end{equation}
where
\begin{align}
f(v)
=
\sqrt{
(\beta+\alpha)^2
-
4\alpha\beta\,\psi_Y(v)
}
,\qquad
g(u,v)
=
\beta
+
\alpha\left(
1-2\psi_Y(v)e^{iu}
\right).
\end{align}
Since the argument of the square root is generally complex, we choose the branch that is continuous in $v$ and satisfies $f(0)=\beta-\alpha>0$, where the positivity follows from the stability condition $\beta>\alpha$.

If the payoff depends only on the future project value and not explicitly on the terminal activation state $Q(t)$, the latter can be marginalized out. Setting $u=0$ removes the factor $e^{iuQ(t)}$ while retaining the effect of the Q-Hawkes dynamics on the distribution of the jump increment. Consequently,
\begin{equation}
\begin{aligned}
\psi_M(v;\tau\mid q)
=
\mathbb{E}\!\left[
e^{iv(M(t)-M(s))}
\mid Q(s)=q
\right]
=
\psi_{Q,M}(0,v;\tau\mid q).
\end{aligned}
\label{eq:qhawkes_jump_cf}
\end{equation}

Finally, conditional on $X(s)=x$ and $Q(s)=q$, the independence of the Brownian motion and the Q-Hawkes jump system yields
\begin{equation}
\begin{aligned}
\varphi_X(v;t,s\mid x,q)
:={}&
\mathbb{E}\!\left[
e^{ivX(t)}
\mid X(s)=x,\ Q(s)=q
\right]
\\
={}&
\exp\!\left[
ivx
+
iv\left(\mu-\frac12\sigma^2\right)\tau
-
\frac12\sigma^2v^2\tau
\right]
\psi_{Q,M}(0,v;\tau\mid q).
\end{aligned}
\label{eq:qhawkes_logvalue_cf}
\end{equation}
This is the conditional characteristic function required for the COS
valuation.
Although the Q-Hawkes process does not possess independent increments when considered without the activation process $Q$, this does not affect the numerical examples presented here. In our applications, Q-Hawkes dynamics are used only over a single designated investment stage, with the activation state initialized at the beginning of that stage. Consequently, only a single conditional transition characteristic function is required, which can be incorporated into the COS recursion in exactly the same manner as the increment characteristic functions considered in Section~\ref{s2}. Extending the recursive formulation across multiple Q-Hawkes stages would require retaining the activation process as an additional state variable.

Two classical benchmark models are recovered as special or limiting cases of the Q-Hawkes specification:
\begin{itemize}
\item In the limit $\alpha\downarrow 0$, the activation process no longer affects the jump intensity, which becomes constant, $\lambda(t)=\lambda^*$. The jump-arrival process therefore reduces to a Poisson process, and the model becomes the Merton jump--diffusion. If the log-jump sizes satisfy
$
Y_j\sim\mathcal{N}(\mu_J,\sigma_J^2),
$
then
$$
\psi_Y(v)
=
\exp\!\left(
iv\mu_J-\frac12\sigma_J^2v^2
\right),
$$
and the characteristic function of the jump increment over $[s,t]$ is
\begin{equation}
\psi_M^{\mathrm{Merton}}(v;t,s)
=
\exp\!\left[
\lambda^*(t-s)
\left(
\exp\!\left(
iv\mu_J-\frac12\sigma_J^2v^2
\right)-1
\right)
\right].
\label{eq:merton_jump_cf}
\end{equation}

\item If the jump component is removed, then the model reduces to GBM.
\end{itemize}

Hence, the Q-Hawkes specification provides an analytically tractable framework that encompasses continuous diffusion dynamics, independent Poisson jumps, and self-exciting clustered jump arrivals.

For the construction of the COS truncation interval, we compute the conditional cumulants of the log-value increment from its characteristic function. Let
\begin{equation}
\chi_q(v;\tau)
=
\exp\!\left[
iv\left(\mu-\frac12\sigma^2\right)\tau
-\frac12\sigma^2v^2\tau
\right]
\psi_{Q,M}(0,v;\tau\mid q),
\label{eq:qhawkes_increment_cf}
\end{equation}
so that
\[
\varphi_X(v;t,s\mid x,q)
=
e^{ivx}\chi_q(v;\tau).
\]
The conditional cumulants are then given by
\begin{equation}
c_n(\tau;q)
=
\frac{1}{i^n}
\left.
\frac{\partial^n}{\partial v^n}
\log\chi_q(v;\tau)
\right|_{v=0},
\qquad n=1,2,4.
\label{eq:qhawkes_cumulants}
\end{equation}
For GBM and the Merton jump--diffusion, the dependence on the activation state $q$ disappears. These cumulants are used to determine the integration interval in the COS approximation.

\subsection{Validation under the Black--Scholes model}\label{sec:validation_bs}

We first validate the analytic compound COS formulation under geometric Brownian motion, for which the European call-on-call compound option admits the closed-form solution of Geske \cite{Geske1979}. This provides an exact benchmark for assessing both the pricing accuracy of the proposed method and the computational gain obtained by replacing the numerical quadrature of the outer COS coefficients with their analytic counterparts.

We consider a European call-on-call compound option with parameters
\begin{align*}
S_0 &= 100, \quad
K_1 = 10, \quad
K_2 = 80, \quad
T_1 = 1.0, \quad
T_2 = 2.0, \quad
r = 0.02, \quad
\sigma = 0.40,
\end{align*}
where $K_1$ is the strike price of the compound option, and $K_2$ is the strike price of the underlying European call option.

To examine the accuracy and convergence of the analytic COS formulation, we compute the option price for increasing numbers of cosine terms. Table~\ref{tab:bs_validation_convergence} reports the resulting COS prices together with their absolute errors relative to Geske's closed-form solution.

\begin{table}[ht]
\centering
\begin{tabular}{c c c}
\hline
$N$ & COS Price & Absolute Error \\
\hline
Geske solution & 24.944697282 & -- \\
32  & 20.921940671 & $4.0\times10^{0}$ \\
64  & 24.944708625 & $1.1\times10^{-5}$ \\
128 & 24.944697282 & $<10^{-10}$ \\
256 & 24.944697282 & $<10^{-10}$ \\
\hline
\end{tabular}
\caption{Validation and convergence of the analytic COS formulation
under Black--Scholes dynamics.}
\label{tab:bs_validation_convergence}
\end{table}

The results show rapid convergence toward the closed-form benchmark. With $N=64$, the absolute error is already of order $10^{-5}$, while $N=128$ achieves an error below $10^{-10}$. This confirms the numerical accuracy of the analytic coefficient construction and demonstrates the fast convergence of the COS approximation under GBM dynamics.

We next assess the computational advantage of the analytic coefficient formulation by comparing it with a quadrature-based COS implementation. A reference value is computed using Geske's formula. For the comparison, both methods use $N=64$ cosine terms, while the number of quadrature nodes $n_q$ in the quadrature-based method is increased until an accuracy comparable to that of the analytic formulation is reached.

\begin{table}[ht]
\centering
\begin{tabular}{c c c c}
\hline
Method & Parameters & Error & CPU (s) \\
\hline
Analytic COS & $N=64$ & $1.1\times10^{-5}$ & $3.38\times10^{-4}$ \\
\hline
Quadrature COS & $N=64,\ n_q=160$  & $1.9\times10^{-3}$ & $2.57\times10^{-4}$ \\
Quadrature COS & $N=64,\ n_q=640$  & $2.1\times10^{-5}$ & $5.49\times10^{-4}$ \\
Quadrature COS & $N=64,\ n_q=5120$ & $1.2\times10^{-5}$ & $3.54\times10^{-3}$ \\
\hline
\end{tabular}
\caption{Accuracy and computational cost of the analytic and quadrature-based COS formulations for $K_1=10$. CPU times are averaged over 10 runs.}
\label{tab:cos_timing_comparison}
\end{table}

For a coarse quadrature grid, the quadrature-based formulation can be slightly faster, but this comes at the cost of a substantially larger discretization error. Increasing $n_q$ improves the accuracy but also increases the computational cost. At comparable accuracy, the analytic formulation is faster: for example, with $n_q=5120$, the quadrature-based method attains an error of the same order as the analytic method but requires approximately an order of magnitude more CPU time.

\subsection{Two-stage pharmaceutical R\&D investment under jump risk and clustering}
\label{s3_schyns53_qhawkes}

We next investigate how discontinuous and clustered information arrival affects the valuation of a staged pharmaceutical R\&D investment. During the early development phase, project values may evolve relatively smoothly, reflecting activities such as legal preparation, licensing, and internal development work. We refer to this stage as the \emph{juristic phase}. By contrast, during the subsequent clinical testing phase, project values may respond abruptly to new information, such as trial outcomes, regulatory feedback, or competing developments. Such events can lead to substantial upward or downward revisions of the project value.

To reflect this economic structure, we allow the project-value dynamics to vary across the two stages. The project value follows GBM during the first phase, whereas jump risk is introduced during the clinical phase. Such stage-dependent dynamics are naturally accommodated by the compound COS framework, since the backward recursion requires only the conditional characteristic function associated with each time interval.

We revisit the two-stage R\&D real-options example of \cite[Sec.~5.3]{Schyns2025}. The investment opportunity can be represented as a call-on-call compound option:
\begin{itemize}
\item at time $T_1$, the firm may pay $K_1$ to retain the right to
proceed to the clinical stage;
\item at time $T_2$, it may pay $K_2$ to commercialize the project,
yielding the terminal payoff
\[
(S(T_2)-K_2)^+.
\]
\end{itemize}

Accordingly, the time-$0$ value is
\[
V(x_0)
=
e^{-\rho T_1}
\mathbb{E}\!\left[
\bigl(V_{\mathrm{inner}}(X(T_1))-K_1\bigr)^+
\right],
\]
where $x_0 = \log(S_0)$, $V_{\mathrm{inner}}$ denotes the value at $T_1$ of the commercialization option maturing at $T_2$, and the expectation is taken under the project-value dynamics described in Section~\ref{sec:asset_dynamics}.

The contract parameters are taken from \cite{Schyns2025}:
$$
S_0=150,\qquad
\sigma=0.25,\qquad
T_1=5,\qquad
T_2=9,\qquad
K_1=58.37,\qquad
K_2=197.22.
$$
and the economic parameters are
$
\mu=0.05, 
\rho=0.10.
$
Using the analytic compound COS method developed in Section~\ref{s2}, the GBM benchmark value is reproduced as
$
V_0=15.44.
$

Starting from this benchmark, we introduce jump risk only during the clinical phase, $[T_1,T_2]$, and compare two alternative jump-arrival mechanisms. Under the Merton jump--diffusion, jumps arrive independently according to a Poisson process, whereas under the Q-Hawkes specification, jump arrivals are self-exciting and may cluster over time.

For the Merton model, we set
\[
\lambda=0.30,\qquad
\mu_J=\pm0.25,\qquad
\sigma_J=0.25.
\]
The case $\mu_J=0.25$ represents an upward-biased jump distribution, corresponding to favorable information such as successful clinical outcomes, whereas $\mu_J=-0.25$ represents a downward-biased jump distribution, corresponding to adverse developments.

For the Q-Hawkes specification, we retain the same jump-size distribution and introduce self-excitation through
\[
\lambda^*=0.274,\qquad
\alpha=0.20,\qquad
\beta=2,
\]
with the activation process initialized at the beginning of the clinical phase as
$Q(T_1)=0.$

Thus, the clinical phase starts from the baseline intensity $\lambda^*$, and clustering develops endogenously following subsequent jump arrivals. The parameters satisfy the stability condition $\beta>\alpha$. Moreover, $\lambda^*$ is chosen such that the expected number of jumps over $[T_1,T_2]$ matches that of the Merton model with intensity $\lambda=0.30$. Hence, the two jump specifications share the same jump-size distribution and expected jump count, while differing in the temporal dependence of jump arrivals.

\begin{table}[ht]
\centering
\renewcommand{\arraystretch}{1.2}
\begin{tabular}{l c}
\hline
Model & Compound value \\
\hline
GBM & 15.44 \\
Merton (positive-mean jumps) & 45.57 \\
Merton (negative-mean jumps) & 7.70 \\
Q-Hawkes (clustered positive-mean jumps) & 47.30 \\
Q-Hawkes (clustered negative-mean jumps) & 7.86 \\
\hline
\end{tabular}
\caption{Impact of jump risk and clustered jump arrivals on the value of
the two-stage pharmaceutical R\&D investment.}
\label{tab:schyns53_qhawkes_realistic}
\end{table}

Table~\ref{tab:schyns53_qhawkes_realistic} shows that jump risk during the clinical phase can have a substantial effect on the compound-option value. Upward-biased jumps markedly increase the investment value by introducing the possibility of large favorable revisions to the project value, such as those following successful clinical outcomes. Conversely, downward-biased jumps reduce the option value by incorporating the risk of adverse trial or regulatory developments.

For both jump-size scenarios, the Q-Hawkes specification produces a slightly higher value than the corresponding Merton model. Since the expected number of jumps and the jump-size distribution are matched, this difference is associated with the self-exciting arrival structure. Under Q-Hawkes dynamics, jump activity is more concentrated, producing relatively quiet periods together with episodes of clustered information arrival. For the nonlinear compound-option payoff, our results suggest that this additional dispersion in jump activity can increase the value of optionality. 

This highlights a practical insight: in staged R\&D investments, the timing structure of information arrival may affect valuation as strongly as the magnitude of individual shocks. By accommodating both independent and self-exciting jump arrivals through their characteristic functions, the analytic compound COS framework allows these effects to be compared within the same pricing recursion, providing a flexible tool for assessing model risk in multi-stage real-option problems.

\subsection{Multi-stage pharmaceutical R\&D investment with late-stage jump risk}
\label{s3_multistage_qhawkes}

We conclude with a multi-stage real-options experiment motivated by pharmaceutical R\&D projects. Drug development typically proceeds through a sequence of staged investments, with continuation decisions made as new information becomes available from laboratory studies, clinical trials, and regulatory assessments.

We model the project using four intermediate decision dates,
\[
T_1=1,\qquad
T_2=2,\qquad
T_3=3,\qquad
T_4=4,
\]
followed by a final commercialization decision at
$
T=5.
$

At each decision date $T_j$, the firm may pay the stage cost $K_j$ to retain the right to continue development. If the firm chooses not to continue, the project is abandoned and its continuation value becomes zero. At the terminal date $T$, the firm may pay the commercialization cost $K_{\mathrm{term}}$, yielding the payoff
$$
(S(T)-K_{\mathrm{term}})^+.
$$

The stage costs are
\[
K_1=15,\qquad
K_2=20,\qquad
K_3=30,\qquad
K_4=45,
\]
and the terminal commercialization cost is
$
K_{\mathrm{term}}=190.
$

As in the preceding experiment, valuation is performed under the project-value measure rather than a risk-neutral measure. The project value has drift $\mu$ and volatility $\sigma$, while future project cash flows are discounted at the project-specific rate $\rho$. We use
\[
S_0=150,\qquad
\mu=0.05,\qquad
\rho=0.10,\qquad
\sigma=0.25.
\]
and all stages are computed with $N=1024$ cosine terms.

Early stages of pharmaceutical development often evolve relatively gradually as information accumulates through laboratory work and early clinical studies. By contrast, late-stage clinical or regulatory outcomes may lead to abrupt revisions of the project value. To reflect this distinction, we consider stage-dependent dynamics in which
\begin{itemize}
\item GBM governs the project value over the early stages $[0,T_4]$;
\item jump risk is introduced only during the final stage $[T_4,T]$.
\end{itemize}

We compare three specifications for the final stage: a GBM benchmark, the Merton jump--diffusion with independent jump arrivals, and the Q-Hawkes specification with self-exciting clustered jump arrivals.

For the Merton model, the log-jump sizes follow
$
Y\sim\mathcal{N}(\mu_J,\sigma_J^2),
$
with parameters
\[
\lambda=0.60,\qquad
\mu_J=\pm0.35,\qquad
\sigma_J=0.25.
\]
The positive-mean and negative-mean cases represent favorable and adverse late-stage information, respectively.

For the Q-Hawkes specification, we retain the same jump-size
distribution and use
\[
\lambda^*=0.583,\qquad
\alpha=0.10,\qquad
\beta=2.0,
\]
with the activation process initialized at the beginning of the final stage as
$
Q(T_4)=0.
$
Thus, the final stage begins at the baseline jump intensity $\lambda^*$, after which clustering develops endogenously through self-excitation. The parameters satisfy the stability condition $\beta>\alpha$.

For a direct comparison with the Merton model, $\lambda^*$ is chosen so that the expected number of Q-Hawkes jumps over $[T_4,T]$ matches that of the Merton model. Under both specifications,
$
\mathbb{E}\!\left[N(T)-N(T_4)\right]\approx0.60.
$
Hence, the two jump models share the same jump-size distribution and expected jump count, while differing in the temporal dependence of jump arrivals.

Table~\ref{tab:multistage_pharma} reports the resulting project values.
\begin{table}[ht]
\centering
\begin{tabular}{l c}
\hline
Model specification & Real option value \\
\hline
GBM (all stages) & 0.678 \\
Merton (positive-mean jumps in final stage) & 6.751 \\
Merton (negative-mean jumps in final stage) & 0.135 \\
Q-Hawkes (clustered positive-mean jumps) & 6.937 \\
Q-Hawkes (clustered negative-mean jumps) & 0.138 \\
\hline
\end{tabular}
\caption{Multi-stage pharmaceutical R\&D valuation with jump risk
introduced only during the final stage.}
\label{tab:multistage_pharma}
\end{table}

The GBM benchmark yields a relatively small project value, reflecting the high terminal commercialization cost relative to the initial project value. Introducing favorable late-stage jump risk substantially increases the value of the development opportunity, since successful clinical or regulatory outcomes may lead to large upward revisions of the project value. Conversely, adverse jumps reduce the option value below the diffusion benchmark. Consistent with the preceding two-stage experiment, the Q-Hawkes specification produces slightly higher values than the corresponding Merton model. Since the expected jump count and jump-size distribution are matched, these differences reflect the self-exciting and clustered structure of jump arrivals.

The COS recursion also provides the critical continuation thresholds $x_j^*$ at each decision date, which determine the minimum project value required for continued investment. Favorable late-stage jump opportunities tend to lower the earlier thresholds, since the possibility of a future breakthrough increases the value of remaining in the project. Conversely, adverse late-stage jump risk tends to raise these thresholds, as continuation becomes less attractive in the presence of unfavorable future outcomes.

This experiment highlights an important economic insight: even when discontinuous information arrives only during the final stage of development, its anticipated effect can propagate backwards through the entire sequence of investment decisions and materially affect the value of the project. From a numerical perspective, the analytic multi-stage COS framework accommodates such stage-specific dynamics without altering the recursive valuation structure: only the characteristic function associated with the relevant stage is modified, while the COS recursion and boundary computation remain unchanged.

\section{Conclusion}

This paper develops a fully analytic COS formulation for compound option valuation. Using the trigonometric structure of the COS expansion, the outer cosine coefficients can be evaluated in a closed form, eliminating the need for numerical quadrature while preserving the convergence properties of the method.

The framework extends to multi-stage compound structures, enabling efficient valuation of sequential investment opportunities such as staged R\&D projects. The resulting analytic formulation improves computational efficiency, enhances numerical stability, and remains applicable to a broad class of independent-increment models with known characteristic functions, including stage-dependent jump-diffusion dynamics.

Numerical experiments confirm the accuracy and efficiency of the approach. Under Black--Scholes dynamics, the method reproduces benchmark solutions while delivering substantial computational speed-ups. Applications to real-options settings further demonstrate that incorporating jump risk can materially affect continuation values, and that the analytic COS recursion enables such model variations to be implemented without altering the pricing structure. Overall, the analytic compound COS method provides a robust tool for pricing nested optionality under realistic dynamics, particularly in multi-stage decision environments where both computational efficiency and modeling richness are essential.

\paragraph{Acknowledgment.}

The authors thank Felice Schyns, whose BSc project provided inspiration for this work, and Dr.~Gero Junike for his valuable comments and suggestions. The first author gratefully acknowledges scholarship support from the China Scholarship Council.

\bibliographystyle{alpha}
\bibliography{references}

\appendix
\section{Closed-form formulas for the simple compound case}
\label{app:detail_Hout}

In this appendix, we provide the detailed derivation of the closed-form expressions used to evaluate the outer cosine coefficients introduced in Section~\ref{sec:simplecompound}. Recall that these coefficients can be written in terms of three elementary integrals as
\begin{align*}
H_{n}^{\mathrm{out}}
&=
\frac{2}{b-a}
\sum_{k=0}^{N_{\mathrm{in}}-1}{}'
\Bigg[
A_k
\int_{\widetilde x^*}^{b}
\cos(\omega_k(x-a'))
\cos(\nu_n(x-a))
\,\d x
\\
&\qquad\qquad
-
B_k
\int_{\widetilde x^*}^{b}
\sin(\omega_k(x-a'))
\cos(\nu_n(x-a))
\,\d x
\Bigg]
\\
&\quad
-
\frac{2K_1}{b-a}
\int_{\widetilde x^*}^{b}
\cos(\nu_n(x-a))
\,\d x \\
& \equiv \sum_{k=0}^{N_{\rm{in}}-1}{}'
\left(
A_k I_{k,n}^{(c)}
-
B_k I_{k,n}^{(s)}
\right)
-
K_1 I_n^{(o)}
\end{align*}
where 
\begin{align*}
I_{k,n}^{(c)}
&=
\frac{2}{b-a}
\int_{\widetilde x^*}^{b}
\cos(\omega_k(x-a'))
\cos(\nu_n(x-a))
\,\d x,
\\
I_{k,n}^{(s)}
&=
\frac{2}{b-a}
\int_{\widetilde x^*}^{b}
\sin(\omega_k(x-a'))
\cos(\nu_n(x-a))
\,\d x,
\\
I_n^{(o)}
&=
\frac{2}{b-a}
\int_{\widetilde x^*}^{b}
\cos(\nu_n(x-a))
\,\d x.
\end{align*}
Next, we derive the closed-form expressions for these three integral terms.

If $\nu_n>0$, that is, $n\geq 1$, we obtain 
\begin{equation}\label{app:I0_npos}
I_n^{(o)}
=
\frac{2}{b-a}
\left[
\frac{\sin(\nu_n(x-a))}{\nu_n}
\right]_{x=\widetilde x^*}^{x=b}.
\end{equation}

For $\nu_0=0$ or equivalently $n=0$, we directly integrate and get
\begin{equation}\label{app:I0_n0}
I_0^{(o)}
=
\frac{2}{b-a}
\int_{\widetilde x^*}^{b}1\,\d x
=
\frac{2(b-\widetilde x^*)}{b-a}.
\end{equation}

For the term $I_{k,n}^{(c)}$, we use 
\begin{equation}
\cos\bar\alpha\cos\bar\beta
=
\frac{1}{2}
\left[
\cos(\bar\alpha-\bar\beta)
+
\cos(\bar\alpha+\bar\beta)
\right], 
\qquad  \forall \bar\alpha, \bar\beta \in \mathbb{R}.
\end{equation}

Let $\bar\alpha=\omega_k(x-a')$ and $\bar\beta=\nu_n(x-a)$. For $\omega_k\neq\nu_n$, we obtain
\begin{align}\label{app:Ic_general}
I_{k,n}^{(c)}
&=
\frac{1}{b-a}
\int_{\widetilde x^*}^{b}
\cos
\bigl(
(\omega_k-\nu_n)x-(\omega_k a'-\nu_n a)
\bigr)
\,\d x
\\
& \quad
+
\frac{1}{b-a}
\int_{\widetilde x^*}^{b}
\cos
\bigl(
(\omega_k+\nu_n)x-(\omega_k a'+\nu_n a)
\bigr)
\,\d x
\nonumber\\
&=
\frac{1}{b-a}
\left[
\frac
{
\sin
\bigl(
(\omega_k-\nu_n)x-(\omega_k a'-\nu_n a)
\bigr)
}
{
\omega_k-\nu_n
}
\right.
\\
&\qquad\qquad\left.
+
\frac
{
\sin
\bigl(
(\omega_k+\nu_n)x-(\omega_k a'+\nu_n a)
\bigr)
}
{
\omega_k+\nu_n
}
\right]_{x=\widetilde x^*}^{x=b},
\end{align}

For $\omega_k = \nu_n = \omega>0$, we use
\begin{equation}
\cos(\omega(x-a'))\cos(\omega(x-a))
=
\frac{1}{2}\cos\!\bigl(\omega(a-a')\bigr)
+
\frac{1}{2}\cos\!\bigl(2\omega x-\omega(a+a')\bigr),
\end{equation}
which gives
\begin{equation}\label{app:Ic_resonant}
I_{k,n}^{(c)}
=
\frac{2}{b-a}
\left[
\frac{b-\widetilde x^*}{2}
\cos\!\bigl(\omega(a-a')\bigr)
+
\frac{1}{2}
\left[
\frac{
\sin\!\bigl(2\omega x-\omega(a+a')\bigr)
}{
2\omega
}
\right]_{x=\widetilde x^*}^{x=b}
\right].
\end{equation}

For $\omega_k = \nu_n = \omega = 0$, corresponding to $k=n=0$, the integrand is constant and direct integration yields
\begin{equation}\label{app:Ic_n0}
I_{0,0}^{(c)} = \frac{2(b-\widetilde x^*)}{b-a}.
\end{equation}

Similarly, for the term $I_{k,n}^{(s)}$ we use
\begin{equation}
\sin\bar\alpha\cos\bar\beta
=
\frac{1}{2}
\left[
\sin(\bar\alpha+\bar\beta)
+
\sin(\bar\alpha-\bar\beta)
\right], 
\qquad  \forall \bar\alpha, \bar\beta \in \mathbb{R}.
\end{equation}

With the same choice for $\bar\alpha$ and $\bar\beta$, and for $\omega_k\neq\nu_n$  we obtain
\begin{align}\label{app:Is_general}
I_{k,n}^{(s)}
& =
\frac{1}{b-a}
\int_{\widetilde x^*}^{b}
\sin 
\bigl(
(\omega_k+\nu_n)x-(\omega_k a'+\nu_n a)
\bigr)
\,\d x  \\
&\quad
+
\frac{1}{b-a}
\int_{\widetilde x^*}^{b}
\sin 
\bigl(
(\omega_k-\nu_n)x-(\omega_k a'-\nu_n a)
\bigr)
\,\d x  \\
&=
-\frac{1}{b-a}
\left[
\frac{
\cos
\bigl(
(\omega_k+\nu_n)x-(\omega_k a'+\nu_n a)
\bigr)
}{
\omega_k+\nu_n
}
\right.   \\
&\qquad\qquad\left.
+
\frac{
\cos 
\bigl(
(\omega_k-\nu_n)x-(\omega_k a'-\nu_n a)
\bigr)
}{
\omega_k-\nu_n
}
\right]_{x=\widetilde x^*}^{x=b},
\end{align}

For $\omega_k=\nu_n=\omega>0$, we use
\begin{equation}
\sin(\omega(x-a'))\cos(\omega(x-a))
=
\frac{1}{2}
\sin \bigl(2\omega x-\omega(a+a')\bigr)
+
\frac{1}{2}
\sin \bigl(\omega(a-a')\bigr),
\end{equation}
and hence
\begin{equation}\label{app:Is_resonant}
I_{k,n}^{(s)}
=
\frac{2}{b-a}
\left[
\frac{b-\widetilde x^*}{2}
\sin \bigl(\omega(a-a')\bigr)
-
\frac{1}{2}
\left[
\frac{
\cos \bigl(2\omega x-\omega(a+a')\bigr)
}{
2\omega
}
\right]_{x=\widetilde x^*}^{x=b}
\right].
\end{equation}

For $\omega_k=\nu_n=\omega = 0$, that is, $k=n=0$, the sine factor vanishes identically and therefore
\begin{equation}\label{app:Is_n0}
I_{0,0}^{(s)}=0.
\end{equation}

\noeqref{app:I0_npos}
\noeqref{app:I0_n0}
\noeqref{app:Ic_general} 
\noeqref{app:Ic_resonant}
\noeqref{app:Ic_n0}
\noeqref{app:Is_general}
\noeqref{app:Is_resonant}
\noeqref{app:Is_n0}

Since the coefficients $A_k$ and $B_k$ have already been computed in the construction of $V_{\mathrm{inner}}^{\mathrm{COS}}$, combining them with the closed-form expressions \eqref{app:I0_npos}--\eqref{app:Is_n0} yields an analytic evaluation of $H_{n}^{\mathrm{out}}$. These coefficients can then be inserted directly into the outer COS valuation formula, eliminating the need for numerical quadrature in the evaluation of the outer payoff coefficients.

Moreover, for fixed $k$ and $n$, the integral terms \eqref{app:I0_npos}--\eqref{app:Is_n0} depend only on the truncation intervals $[a,b]$ and $[a',b']$ and the numerical exercise boundary $\widetilde{x}^*$. This structure will be useful when extending the analytic construction from the simple compound option to the multi-stage compounding case.

\section{Closed-form formulas for the multiple compound case}
\label{app:multistage_analytic_coeffs}

The derivation of the COS coefficients for the multi-stage compounding case follows the same structure as in Appendix~\ref{app:detail_Hout}, as a consequence of the invariant trigonometric structure across successive compounding layers.

Recall from Section~\ref{sec:multi_compound} that $V^{(m),\mathrm{COS}}(x_m)$ is obtained directly from the standard COS approximation of the terminal European call payoff and therefore admits an explicit trigonometric representation. Suppose recursively that, for some $i=m-1,\ldots,0$, the COS approximation $V^{(i+1),\mathrm{COS}}$ is available in the form
\begin{equation}
V^{(i+1),\mathrm{COS}}(x_{i+1})
=
\sum_{k=0}^{N_{i+2}-1}{}'
\left(
A_k^{(i+1)}
\cos\!\left(
\omega_k^{(i+2)}(x_{i+1}-a_{i+2})
\right)
-
B_k^{(i+1)}
\sin\!\left(
\omega_k^{(i+2)}(x_{i+1}-a_{i+2})
\right)
\right),
\end{equation}
where $A_k^{(i+1)}$ and $B_k^{(i+1)}$ depend on $H_k^{(i+2)}$ through their definitions and are available from the preceding backward step.

To compute $V^{(i),\mathrm{COS}}(x_i)$, we first determine the payoff coefficients at time $T_{i+1}$. Under the single-boundary condition introduced in Section~\ref{sec:multi_compound}, these coefficients are given by
\begin{equation}
H_n^{(i+1)}
=
\frac{2}{b_{i+1}-a_{i+1}}
\int_{\widetilde{x}_{i+1}^*}^{b_{i+1}}
\left(
V^{(i+1),\mathrm{COS}}(x_{i+1})-K_{i+1}
\right)
\cos\!\left(
\omega_n^{(i+1)}(x_{i+1}-a_{i+1})
\right)
\,\mathrm d x_{i+1}.
\end{equation}

Substituting the trigonometric representation of $V^{(i+1),\mathrm{COS}}(x_{i+1})$ into $H_n^{(i+1)}$, we obtain
\begin{align*}
H_n^{(i+1)}
&=
\frac{2}{b_{i+1}-a_{i+1}}
\sum_{k=0}^{N_{i+2}-1}{}'
\Bigg[
A_k^{(i+1)}
\int_{\widetilde x^*_{i+1}}^{b_{i+1}}
\cos\!\left(
\omega_k^{(i+2)}(x_{i+1}-a_{i+2})
\right)
\cos\!\left(
\omega_n^{(i+1)}(x_{i+1}-a_{i+1})
\right)
\,\mathrm d x_{i+1}
\\
&\qquad\qquad
-
B_k^{(i+1)}
\int_{\widetilde x^*_{i+1}}^{b_{i+1}}
\sin\!\left(
\omega_k^{(i+2)}(x_{i+1}-a_{i+2})
\right)
\cos\!\left(
\omega_n^{(i+1)}(x_{i+1}-a_{i+1})
\right)
\,\mathrm d x_{i+1}
\Bigg]
\\
&\quad
-
\frac{2K_{i+1}}{b_{i+1}-a_{i+1}}
\int_{\widetilde x^*_{i+1}}^{b_{i+1}}
\cos\!\left(
\omega_n^{(i+1)}(x_{i+1}-a_{i+1})
\right)
\,\mathrm d x_{i+1}
\\
&=
\sum_{k=0}^{N_{i+2}-1}{}'
\left(
A_k^{(i+1)} I_{k,n}^{(c,i+1)}
-
B_k^{(i+1)} I_{k,n}^{(s,i+1)}
\right)
-
K_{i+1} I_n^{(o,i+1)},
\end{align*}
where
\begin{align}
I_{k,n}^{(c,i+1)}
&:=
\frac{2}{b_{i+1}-a_{i+1}}
\int_{\widetilde x^*_{i+1}}^{b_{i+1}}
\cos\!\left(
\omega_k^{(i+2)}(x_{i+1}-a_{i+2})
\right)
\cos\!\left(
\omega_n^{(i+1)}(x_{i+1}-a_{i+1})
\right)
\,\mathrm d x_{i+1},
\\
I_{k,n}^{(s,i+1)}
&:=
\frac{2}{b_{i+1}-a_{i+1}}
\int_{\widetilde x^*_{i+1}}^{b_{i+1}}
\sin\!\left(
\omega_k^{(i+2)}(x_{i+1}-a_{i+2})
\right)
\cos\!\left(
\omega_n^{(i+1)}(x_{i+1}-a_{i+1})
\right)
\,\mathrm d x_{i+1},
\\
I_n^{(o,i+1)}
&:=
\frac{2}{b_{i+1}-a_{i+1}}
\int_{\widetilde x^*_{i+1}}^{b_{i+1}}
\cos\!\left(
\omega_n^{(i+1)}(x_{i+1}-a_{i+1})
\right)
\,\mathrm d x_{i+1}.
\end{align}

These integrals have the same structure as those derived in Appendix~\ref{app:detail_Hout}. Applying the same trigonometric identities yields the following closed-form expressions.

For $I_n^{(o,i+1)}$, if $\omega_n^{(i+1)}>0$, we have
\begin{equation}
I_n^{(o,i+1)}
=
\frac{2}{b_{i+1}-a_{i+1}}
\left[
\frac{
\sin\!\left(
\omega_n^{(i+1)}(x_{i+1}-a_{i+1})
\right)
}{
\omega_n^{(i+1)}
}
\right]_{x_{i+1}=\widetilde x^*_{i+1}}^{x_{i+1}=b_{i+1}}.
\end{equation}
If $\omega_n^{(i+1)}=0$, then
\begin{equation}
I_n^{(o,i+1)}
=
\frac{2(b_{i+1}-\widetilde x^*_{i+1})}
{b_{i+1}-a_{i+1}}.
\end{equation}

For $I_{k,n}^{(c,i+1)}$, if
$\omega_k^{(i+2)}\neq\omega_n^{(i+1)}$, then
\begin{align}
I_{k,n}^{(c,i+1)}
&=
\frac{1}{b_{i+1}-a_{i+1}}
\left[
\frac{
\sin\!\left(
(\omega_k^{(i+2)}-\omega_n^{(i+1)})x_{i+1}
-
(\omega_k^{(i+2)}a_{i+2}
-\omega_n^{(i+1)}a_{i+1})
\right)
}{
\omega_k^{(i+2)}-\omega_n^{(i+1)}
}
\right.
\\
&\qquad\qquad\left.
+
\frac{
\sin\!\left(
(\omega_k^{(i+2)}+\omega_n^{(i+1)})x_{i+1}
-
(\omega_k^{(i+2)}a_{i+2}
+\omega_n^{(i+1)}a_{i+1})
\right)
}{
\omega_k^{(i+2)}+\omega_n^{(i+1)}
}
\right]_{x_{i+1}=\widetilde x^*_{i+1}}^{x_{i+1}=b_{i+1}}.
\end{align}

If
$\omega_k^{(i+2)}=\omega_n^{(i+1)}=\omega>0$, then
\begin{equation}
I_{k,n}^{(c,i+1)}
=
\frac{2}{b_{i+1}-a_{i+1}}
\left[
\frac{b_{i+1}-\widetilde x^*_{i+1}}{2}
\cos\!\left(
\omega(a_{i+1}-a_{i+2})
\right)
+
\frac{1}{2}
\left[
\frac{
\sin\!\left(
2\omega x_{i+1}-\omega(a_{i+1}+a_{i+2})
\right)
}{
2\omega
}
\right]_{x_{i+1}=\widetilde x^*_{i+1}}^{x_{i+1}=b_{i+1}}
\right].
\end{equation}

If
$\omega_k^{(i+2)}=\omega_n^{(i+1)}=\omega=0$, then
\begin{equation}
I_{k,n}^{(c,i+1)}
=
\frac{2(b_{i+1}-\widetilde x^*_{i+1})}
{b_{i+1}-a_{i+1}}.
\end{equation}

For $I_{k,n}^{(s,i+1)}$, if
$\omega_k^{(i+2)}\neq\omega_n^{(i+1)}$, we obtain
\begin{align}
I_{k,n}^{(s,i+1)}
&=
-\frac{1}{b_{i+1}-a_{i+1}}
\left[
\frac{
\cos\!\left(
(\omega_k^{(i+2)}+\omega_n^{(i+1)})x_{i+1}
-
(\omega_k^{(i+2)}a_{i+2}
+\omega_n^{(i+1)}a_{i+1})
\right)
}{
\omega_k^{(i+2)}+\omega_n^{(i+1)}
}
\right.
\\
&\qquad\qquad\left.
+
\frac{
\cos\!\left(
(\omega_k^{(i+2)}-\omega_n^{(i+1)})x_{i+1}
-
(\omega_k^{(i+2)}a_{i+2}
-\omega_n^{(i+1)}a_{i+1})
\right)
}{
\omega_k^{(i+2)}-\omega_n^{(i+1)}
}
\right]_{x_{i+1}=\widetilde x^*_{i+1}}^{x_{i+1}=b_{i+1}}.
\end{align}

If
$\omega_k^{(i+2)}=\omega_n^{(i+1)}=\omega>0$, then
\begin{equation}
I_{k,n}^{(s,i+1)}
=
\frac{2}{b_{i+1}-a_{i+1}}
\left[
\frac{b_{i+1}-\widetilde x^*_{i+1}}{2}
\sin\!\left(
\omega(a_{i+1}-a_{i+2})
\right)
-
\frac{1}{2}
\left[
\frac{
\cos\!\left(
2\omega x_{i+1}-\omega(a_{i+1}+a_{i+2})
\right)
}{
2\omega
}
\right]_{x_{i+1}=\widetilde x^*_{i+1}}^{x_{i+1}=b_{i+1}}
\right].
\end{equation}

If
$\omega_k^{(i+2)}=\omega_n^{(i+1)}=\omega=0$, then
\begin{equation}
I_{k,n}^{(s,i+1)}=0.
\end{equation}

Using these closed-form expressions, all payoff coefficients $H_n^{(i+1)}$, $n=0,\ldots,N_{i+1}-1$, can be evaluated analytically. The coefficients $A_n^{(i)}$ and $B_n^{(i)}$ then follow directly from \eqref{multi_AB_coeff}, yielding the trigonometric representation of $V^{(i),\mathrm{COS}}(x_i)$. Starting from the explicitly available $V^{(m),\mathrm{COS}}$, the same procedure can be applied recursively for $i=m-1,\ldots,0$, ultimately yielding $V^{(0),\mathrm{COS}}(x_0)$ without introducing numerical quadrature at any intermediate compounding stage.

\end{document}